\documentclass{article}
\usepackage{graphicx, amsmath, amsthm, amsfonts, hyperref, cleveref, enumerate, appendix} 
\usepackage[round,comma]{natbib}
\usepackage{tabularray}

\UseTblrLibrary{booktabs}
\UseTblrLibrary{siunitx}
\usepackage{threeparttablex}
\usepackage{booktabs}
\usepackage{threeparttable}

\usepackage{siunitx}
\usepackage{setspace}
\usepackage{geometry}
\usepackage{float}
\usepackage{codehigh}
\usepackage[normalem]{ulem}

\NewTableCommand{\tinytableDefineColor}[3]{\definecolor{#1}{#2}{#3}}

\newcommand{\sym}[1]{\ifmmode^{#1}\else\(^{#1}\)\fi}

\newtheorem{lemma}{Lemma}

\newtheorem{proposition}{Proposition}

\title{Attention and Social Learning\thanks{We thank Cuimin Ba, Drew Fudenberg, Marina Halac, Jawwad Noor, Doron Ravid, Collin Raymond and seminar and conference audiences for valuable comments. We thank Raghu Arghal, Varshitha Satish Kumar, Stephan Xie, and Kyle Xiong for excellent research assistance on this project or earlier work incorporated into our experimental platform. We gratefully acknowledge financial support from NSF Grants SES-2214950 and SES-2215256.}}
\author{Krishna Dasaratha\thanks{Boston University. Email: \texttt{\protect\href{mailto:krishnadasaratha@gmail.com}{krishnadasaratha@gmail.com}}}
\and Kevin He\thanks{University of Pennsylvania. Email: \texttt{\protect\href{mailto:hesichao@gmail.com}{hesichao@gmail.com}}}}
\date{\today}

\begin{document}

\maketitle

\begin{abstract}

In an incentivized laboratory experiment, we study how people account for and respond to others' incentives for paying attention. Participants learn a binary state from an attention task under high or low accuracy incentives. We ask  subjects to predict their peers' accuracy based on the peers' incentives and to aggregate answers from multiple peers with different incentives. Most subjects fail to consistently understand that peers with stronger incentives are more accurate, and these subjects  also perform worse in individual attention tasks. Subjects also participate in a social-learning task where they  first learn the binary state from an attention task, then observe a peer's guess about the state in the same task, and finally make a guess themselves. We find behavior in these tasks is inconsistent with leading models of flexible costly information acquisition. In particular, subjects fail to pay more attention when paired with lower incentive peers. Overall, we find that many decision-makers do not respond to others' incentives for accuracy even when those incentives are transparent. 
\end{abstract}
\newpage
\section{Introduction}

People often learn from peers who only pay partial attention to the information available to them. For instance, we mostly learn about current events through news reporting produced by journalists, who often have more direct access to relevant information sources than the general public.  Given the wealth of information available to journalists, they must decide which parts to focus on and how much attention they should pay. This may take the form of deciding how many experts to interview or how carefully to fact-check before publishing their work. Similarly,  households learn about investment opportunities through their friends paying attention to market data, patients learn about diagnoses and treatment options through medical professionals paying attention to test results and clinical guidelines, and shoppers learn about products through the recommendations of  reviewers paying attention to product features and performance benchmarks. In all of these settings, people rely on intermediaries who have exerted costly effort to attend to information. 

The attention levels of these information intermediaries depend on the incentives they face. For example, a journalist who works for a newspaper that heavily rewards its employees for careful and accurate coverage will likely exert more attentional effort verifying facts than their counterpart at a media outlet oriented toward sensationalist reporting. This leads to the main question of our study: how well do people learn from others in situations involving attention, and in particular do people correctly understand how variations in incentives to pay attention affect the accuracy of social information from different peers?

There are several plausible channels through which attention could limit or bias social learning. For example, if people neglect the effect of incentives on others' accuracy, they may trust unreliable sources and reliable sources equally. Another possibility is that people rely too heavily on others' attention, which could lead to inefficiently low levels of attention in larger groups (see \cite{kubler2004limited}). As one real-world example, \cite{sundar2025sharing} find that when Facebook users re-share others' posts that include a link, only 25\% of them click on the link before re-sharing -- with many of the remaining 75\% presumably relying on the original poster to have paid attention to the linked story and verified its quality.

We study how people account for and respond to others’ incentives for paying attention in an incentivized laboratory experiment, using the online experimental subject pool Prolific. A first wave of 600 participants guesses an unknown binary state, which they can learn perfectly by counting balls of two different colors on their screen (as in \cite{dean2023experimental}). Subjects are randomly assigned to an incentive-level treatment that determines how much they are paid when they guess correctly. We find that subjects guess with 69\% accuracy in a high-incentive treatment and 63\% accuracy in a low-incentive treatment. 

The main focus of the study is on how a second wave of 393 participants infers the state from the guesses and incentive levels of their first-wave predecessors. Subjects in the second wave complete three types of tasks. First, we ask second-wave subjects to estimate the average accuracies of the first-wave participants in the two incentive levels. Second, we ask the second-wave subjects to form a belief about the state given the guesses of two first-wave participants, one from each incentive level. Third, we ask the second-wave subjects to participate in a task involving both paying attention themselves and learning from others, which we call the \emph{attention substitution task}. Subjects  face the  same attention task where the state is revealed by the color majority on their screen. After they have paid attention to the screen and before they make their final guess, these subjects additionally observe the binary guess of a first-wave participant in the same attention task and the incentive level of this predecessor. Second-wave subjects can combine this social observation with the information they extracted from paying attention themselves when making a final guess about the state.

Results from the first two tasks show that around half of the subjects fail to infer even the \emph{directional} effect that higher incentives for attention generate higher accuracy in their peers.  In line with a well-documented bias in social-learning experiments  \citep{weizsacker2010we, conlon2022not}, second-wave subjects underestimate  first-wave subjects' accuracy at every incentive level. Subjects predict an average accuracy of 57\% in the low-incentive treatment and 60\% in the high-incentive treatment. These aggregate beliefs mask a wide heterogeneity, where  only a slight majority (55\%) of subjects estimate a strictly higher accuracy in the high-incentive treatment than in the low-incentive treatment. The other subjects either predict the same accuracy in each (19\%) or lower accuracy in the high-incentive treatment (25\%). We find similar patterns in the second task, when second-wave subjects are asked to aggregate two guesses made by two predecessors with different incentive levels.  When the two predecessors' guesses disagree, subjects correctly place a higher weight on the high-incentive guess 42\% of the time. The remaining majority of the subjects either treat the two guesses as equally informative or treat the low-incentive guess as more informative. Overall, we find that most subjects fail to consistently understand how incentives for attention affect peers' accuracy.

We can explain part of this heterogeneity based on performance in individual tasks: subjects who are more accurate in individual attention tasks are more likely to treat peers with stronger incentives as more accurate. There is not a similar association between  time spent on individual attention tasks and treating stronger incentive peers as more accurate, suggesting the heterogeneity may be due to differences in ability to process information rather than differences in the amount of effort that people put into completing the experiment. The upshot is that there is a large group of subjects who are both less accurate in processing information individually and less able to discern the relative accuracy of social information from different sources based on the sources' incentives. 

In the attention-substitution task, we similarly find that subjects' attention behavior is not responsive to others' incentives for attention. As outlined above, in this task subjects are given a chance to learn about the binary state by paying attention. After completing the attentional task, each subject observes the binary guess of a predecessor (with a known incentive level) and makes a binary guess about the state. This task lets us ask a more applied question about how subjects' own attention depends on others' incentives, but is also designed to test a standard class of costly information acquisition models. Suppose people are rational and have a posterior-separable cost of  information  acquisition satisfying mild conditions. One example is the rational inattention model with Shannon entropy cost. Another example is the drift-diffusion model, where decision-makers pay attention by watching a noisy continuous-time process with a state-dependent drift and incur a constant cost per unit of time until they decide to stop. This class of models predicts that  subjects will pay (weakly) more attention when they anticipate receiving social information from a less incentivized (and thus less accurate) peer, but this prediction is not consistent with the data. We consider two measures of attention: how long subjects spend on paying attention and learning the state, and accuracy in an unincentivized initial guess based only on their own attention. We do not find that subjects pay more attention by either of these two measures when they anticipate getting social information from a predecessor with a lower incentive level. In fact, these two measures of attention are directionally \emph{lower} for subjects paired with low-incentive predecessors, and statistically significantly so for the amount of time spent on attention. 

Another theoretical prediction of the posterior-separable cost of information acquisition is that subjects' choices satisfy a \emph{no integration} condition: if they pay any non-zero attention cost, then they must rationally ignore their social information. There is never non-trivial integration of the information acquired through attention and the information from social peers. We find that subjects pay substantial attention in ways that violate this  prediction. On average, the level of attention among second-wave subjects who anticipate  receiving social information is roughly comparable to that of first-wave subjects in the high-incentive treatment, despite the fact that the latter were paid more for correct guesses and had no access to social information. Even though subjects pay a lot of attention, they do not ignore social information. Among subjects who spend at least 10 seconds counting the colored balls, 20\% change their initial guess when it is contradicted by their predecessor's guess. Overall, subjects' attentional choices either are not fully explained by rational models or require relaxing posterior-separability of information costs or full flexibility of information choice. 

In summary, we find that subjects often do not respond to others' incentives for attention. First, only a slight majority of the subjects predict accuracy is strictly higher in the high-incentive treatment. Second, only a minority of subjects treat high-incentive guesses as  strictly more informative than low-incentive guesses in  aggregation tasks. Third, subjects do not pay more attention themselves when they anticipate getting social information from a lower-attention information source. These findings indicate that even in abstract settings where others' incentives for accuracy are transparent, many people are not responsive to these incentives. Beyond experimental settings, such non-responsiveness would lead people to ignore the differences in reliability between information sources and to independently fact-check sources with high-incentives for accuracy as much as those with low-incentive.

\section{Related Literature}

Our experiment relates to an experimental literature on incentives for attention. Most recent papers look at single-agent settings (e.g., \cite{caplin2020rational}, \cite{dean2023experimental}, and \cite{bronchetti2023attention}). Phase One of our experiment is based on this literature, but our main focus is on how incentives for attention impact social learning in multi-agent settings. Our attention-substitution task also generates indirect tests of models of costly information acquisition, which we see as complementing more direct tests in prior work. In particular, \cite{dean2023experimental} find subjects' behavior is consistent with rational models of costly attention but not with the specific functional form of entropy cost. \cite{denti2022posterior} finds that \cite{dean2023experimental}'s subjects also violate axioms satisfied by any rational agents with posterior-separable cost of attention. We test different predictions of the posterior-separable model and also find several violations.

In multi-agent settings, \cite{almog2024rational} and \cite{spurlino2025rationally}  study trading games where subjects can learn about an object's value through attentional tasks. \cite{almog2024rational} pair perfectly informed sellers with buyers who choose how much attention to pay to the object's value. Their focus is on buyer behavior rather than seller beliefs about buyer attention, which are not elicited. \cite{spurlino2025rationally} implement a trading game where both agents can pay attention to the value of a trade and separately vary the difficulty of each agent's attentional task. Beliefs about others' attention are inferred from agent actions rather than directly elicited. A main finding is that most subjects do not respond to the difficulty of the other agent's attention task, paralleling our result that a majority of subjects fail to treat higher-incentive subjects as more informative. 

In a social-learning setting, \cite{kubler2004limited} study sequential social learning when agents face monetary (and not attentional) costs to acquire private signals. The Nash equilibrium prediction is that only the first agent acquires a signal and all others mimic their action. But the experiment found that many subjects buy  signals, including when doing so is suboptimal; we similarly find that subjects pay large amounts of attention to learn about the state themselves even when they can learn from others.

Finally, our tasks in which subjects aggregate multiple peers' guesses relate to a large literature on belief updating (as surveyed by \cite{benjamin2019errors}). Recent work in settings where the complexity of the updating task presents a challenge to subjects includes \cite*{augenblick2025overinference} and \cite*{ba2024over}. Whereas this literature focuses on errors in information processing and belief updating, our paper focuses on a complementary bias where people misunderstand how economic incentives of the information intermediaries affect their accuracies. Indeed, we find in the aggregation tasks that subjects' average beliefs about the state are close to the correct Bayesian posteriors given their average beliefs about peers' guess accuracies. But, most people fail to appreciate how others' incentives affect their accuracies, leading to \emph{directional} errors when reconciling contradictory views from multiple peers or deciding how much effort to put into verifying the claims of different sources.  

\section{Experimental Setup}

A total of 993 subjects participated in the study through the  online platform Prolific in February and March 2025. The experiment had two phases. Phase One involved 600 subjects who repeatedly performed an attention task. Phase Two involved 393 subjects who completed a combination of attention tasks, estimation tasks about others' accuracy, and attention-substitution tasks. We  recruited subjects located in the United States who had  completed at least 10 prior studies on Prolific and had an approval rate of at least 95\%. Subjects were paid both a fixed show-up fee and a bonus payment that depended on the accuracy of their responses.

Before starting the experiment, all subjects answered comprehension questions to ensure that they understood the instructions.  As discussed below, we also included additional comprehension questions halfway through the study for Phase Two subjects to test whether they understood the timing and payment rules in  the attention-substitution task.  Subjects had two tries to correctly answer all the comprehension questions; otherwise, they were screened out of the study and replaced with new subjects.\footnote{A total of 400  participants completed Phase Two of the study on Prolific, and we  excluded the data from seven of these subjects because they were screened out by comprehension questions or exceeded the time limit but told the Prolific system that they completed the study successfully.}

The experimental instructions and interface  can be found in Appendix \ref{sec:interface}. We pre-registered this study, including the attention and social-learning tasks, the incentive levels, the main regressions, and the expected sample sizes (which were almost exactly met). The pre-registration document can be found at: \url{https://aspredicted.org/h3mc7s.pdf}. The remainder of this section describes the tasks completed by subjects in each phase of the experiment.

\subsection{Phase One}

Each subject in Phase One completed 20 repetitions of an attention task. The goal in each repetition was to guess whether a hidden fair coin landed heads or tails. The screen showed a $15 \times 15$ grid of 225 balls that were either orange or purple  (see \Cref{fig:grid_example}). Subjects knew  that there would be 115 orange balls and 110 purple balls in the grid if the coin landed heads, and 115 purple balls and 110 orange balls in the grid if the coin landed tails. Subjects made a binary guess about the outcome of the coin toss based on the grid.  The fair coin was re-tossed for each  of the 20 repetitions of the attention task and a new grid was generated based on  the new coin-toss outcome. Subjects could spend as much or as little time as they chose on each repetition.\footnote{Subjects timed out of the study if they did not complete it within 75 minutes, but our data show this was not a binding time constraint for any of the subjects.} 

At the time of our experiment, large language models including ChatGPT o3-mini-high (which was the most advanced model available under OpenAI's Plus subscription plan) struggled with this problem. Moreover, we found that they frequently gave clearly nonsensical answers or took several minutes to output a response.

Subjects were paid a bonus payment at the end of the experiment based on how many correct guesses they made in the 20 repetitions, but received no feedback about their accuracy between the repetitions. The participants in Phase One of the experiment were randomly assigned into either the high-incentive treatment or the low-incentive treatment, with $300$ subjects in each. Subjects were told at the start of the study how much they would get paid per correct guess, which was \$1.25 for those in the high-incentive treatment and \$0.05 for those in the low-incentive treatment.

\begin{figure}
    \centering
    \includegraphics[scale=.25]{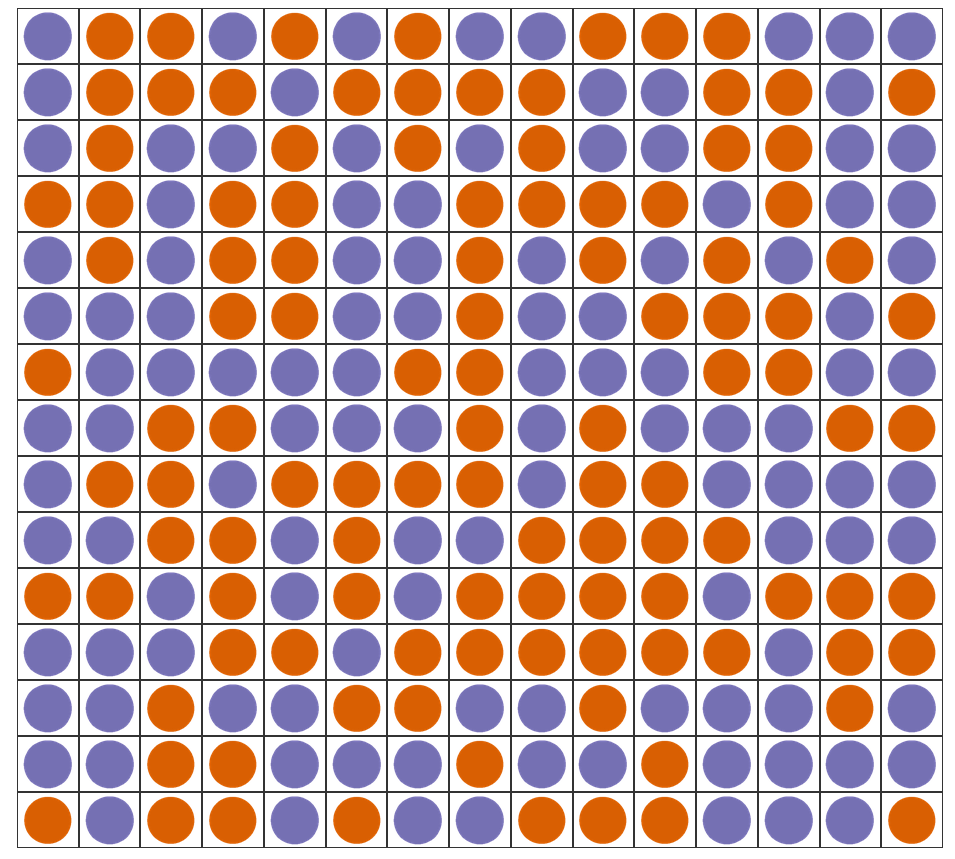}
    \caption{An example grid with 115 purple balls and 110 orange balls.}
    \label{fig:grid_example}
\end{figure}

\subsection{Phase Two}

Participants in Phase Two of the study completed a series of tasks that involved estimating the accuracy of Phase One subjects' guesses or making inferences about the outcomes of the coin tosses based on these guesses. Each subject was initially randomly assigned to a high-incentive or low-incentive treatment as in Phase One, with 198 subjects in the low-incentive treatment and 195 subjects in the high-incentive treatment. Phase Two subjects were eventually told that some Phase One participants had an incentive level different from theirs (as described below), but we postponed providing this information to minimize experimenter demand effects on subjects' estimations of accuracies under different incentive levels. We now describe the tasks that Phase Two subjects completed in the  order that they appeared in the experiment:

\begin{enumerate}
\item \textbf{Attention Task:} Each subject began by completing two repetitions of the attention task from Phase One to gain familiarity. Subjects in the high-incentive treatment were paid \$1.25 for each correct guess, while those in the low-incentive treatment were paid \$0.05 for each correct guess.

\item \textbf{Beliefs about Accuracy in Own Treatment:} Subjects were told that some Phase One subjects completed attention tasks under the same incentive structure as them. They were then asked to make a point estimate about the average accuracy among the guesses of these earlier peers. We paid subjects \$0.50 if the estimate  was within 2 percentage points of the truth. 

\item \textbf{Attention-Substitution Task:} Subjects then completed 10 repetitions of a task where they first decided how much  attention to pay to  learn about a coin-toss outcome,  then  observed the guess of a Phase One subject about the same outcome. That is, subjects made attention choices in anticipation of later observing social information in the form of another  subject's guess. The timeline of each repetition of the task was as follows:
\begin{enumerate}[(a)]
    \item A hidden fair coin was tossed and the screen showed a grid generated in the same way as in the attention task from Phase One. The subject could freely choose how much time to spend on paying attention to this grid. 
    \item The subject made an unincentivized binary guess about the coin-toss outcome.
    \item The grid was removed from the screen and the subject learned the binary guess of a Phase One subject who saw a grid generated based on the same coin toss outcome (the placements of the colored balls in the grids were independently randomized for the Phase One and Phase Two subjects). A Phase Two subject in the low-incentive treatment always saw the guess of a Phase One peer from the low-incentive treatment, and vice versa for a Phase Two subject in the high-incentive treatment. 
    \item The subject made a final binary guess of the coin-toss outcome with a \$0.50 bonus for correctness. (This bonus amount did not vary based on the Phase Two subject's incentive treatment.) The subject's final  guess could incorporate information that they  acquire from paying attention to the grid on the previous screen  and/or information contained in  the Phase One peer's guess. (It was not possible to return to the previous screen to see the grid again.) 
\end{enumerate}

This sequence of events was fully explained to subjects before they began the first repetition of the task. In particular, subjects were told that they would be observing a Phase One subject's guess and were told how much that Phase One subject was paid for correct guesses. After this explanation, subjects were required to complete additional comprehension questions to ensure that they understood the timeline of the task and how their bonus payment for this task was determined. 

\item \textbf{Beliefs about Accuracy in Other Treatment:} At this point in the experiment, we revealed for the first time that some Phase One subjects were paid \$0.05 per correct guess, while others were paid \$1.25 per correct guess. The Phase Two subjects were then asked to make a point estimate about the accuracy of Phase One subjects in the incentive treatment that was different from their own. We again paid a \$0.50 bonus if the estimate was within 2 percentage points of the truth.

\item \textbf{Aggregating Guesses from Different Incentive Treatments:} In the final task, we asked Phase Two subjects to aggregate two guesses made by two Phase One subjects from different incentive treatments about  the same coin-toss outcome.\footnote{To set up this task, during Phase One of the study we paired high-incentive and low-incentive subjects and had them make guesses about the same coin-flip outcome based on conditionally independent grids.} For example, one aggregation task involved situations where a high-incentive Phase One subject guessed ``heads'' while a low-incentive Phase One subject guessed ``tails'' about the same coin-toss outcome.  We then asked the Phase Two subject to estimate the fraction of such situations in which the true coin-flip outcome was heads. We paid a \$0.50 bonus if the estimate was within 2 percentage points of the truth.

Each Phase Two subject made four estimates, corresponding to the four possible combinations of binary guesses of the high-incentive and low-incentive peers (HH, HT, TH, and TT). We randomized the order in which the four combinations were presented and whether the first guess was made by a high-incentive subject or a low-incentive subject within each combination.

\end{enumerate}

\section{Accuracy, Beliefs about Accuracy, and Aggregation Tasks}

This section presents Phase One results on subjects' performance in attention tasks and Phase Two results on subjects' beliefs about others' accuracies under different incentive treatments, as revealed through direct elicitation and through aggregation tasks. 

\subsection{Performance in Attention Tasks}

We begin by comparing accuracy in the two incentive treatments. We find that 63\% of low-incentive guesses are correct and 69\% of high-incentive guesses are correct. To establish significance of this difference, we consider the pre-registered regression 

\begin{equation}
A_{i,j}=\alpha+\beta_{H}\cdot H_{i}+\epsilon_{i,j},\label{eq:accuracy_basic}
\end{equation}
where $A_{i,j}$ is a binary variable indicating whether subject $i$
correctly guessed the coin-toss outcome in the $j$-th round of the
attention task for $1\le j\le20$, $H_{i}=1$ if subject $i$ is in
the high-incentive treatment and $H_{i}=0$ otherwise. We cluster
error terms $\epsilon_{i,j}$ at the subject level in all applicable
regressions. We estimate (see Table \ref{tab:accuracy_five_specs}, Column (1)) $\hat{\alpha}=0.629$
and $\hat{\beta}_{H}=0.060$, both significant at the 1\% level. As
in \cite{dean2023experimental}, we find that Phase One subjects are responsive to incentive
levels. 

High-incentive subjects also spend longer on the attention task. To see this, we estimate the regression
\[
D_{i,j}=\alpha+\gamma_{H}\cdot H_{i}+\gamma_{R}\cdot j+\epsilon_{i,j}
\]
 where $D_{i,j}$ is the number of seconds that subject $i$ spent
in the $j$-th round of the attention task for $1\le j\le20$ (see Table \ref{tab:reg_duration} for results). We
find $\hat{\alpha}=28.27,$ $\hat{\gamma}_{H}=12.07$, and $\hat{\gamma}_{R}=-0.95$,
all significant at the 1\% level. On average high-incentive subjects
spend 12 more seconds on each attention task, and subjects tend to
spend about 1 second less on each subsequent repetition of the task.

Spending more time on the attention task is associated with higher accuracy. We can see this from our preferred specification of the accuracy regression (Table
\ref{tab:accuracy_five_specs}, Column (3)): 
\[
A_{i,j}=\alpha+\beta_{H}\cdot H_{i}+\beta_{D}\cdot D_{i,j}+\beta_{R}\cdot j+\epsilon_{i,j}.
\]
We estimate $\hat{\beta}_{D}=0.0032$ and $\hat{\beta}_{R}=0.0032$,
both significant at the 1\% level. Spending ten more seconds on an
attention task is associated with a 3.2 percentage points increase
in accuracy, which is the same effect size as gaining experience equivalent to ten repetitions of the attention task. This suggests that a substantial portion of the accuracy difference between treatments is explained by the difference in time spent. We also find $\hat{\beta}_{H}=0.020$
with a $p$-value of 0.073. This shows that higher incentive is weakly
associated with higher accuracy even controlling for the length of
time spent, which may come from greater intensity of attention or
mental effort.

\subsection{Beliefs about Others' Accuracy}\label{s:beliefs_about_accuracy}

We next describe Phase Two subjects' estimations of Phase One subjects' accuracy. We begin by analyzing average beliefs about accuracy and then turn to exploring heterogeneity.

Recall that  Phase Two subjects were initially only aware of one incentive treatment and were asked to estimate their Phase One peers' average accuracy at the same incentive level as their own. They were later told about the other incentive treatment and asked to estimate earlier subjects' average accuracy at this other incentive level. Subjects spent an average time of 32 seconds on each of these estimates.

Our pre-registration includes a main regression where we only use each subject's initial estimation about accuracy in their own treatment (without knowledge of the other treatment) and a secondary regression where we use both estimations from each subject. For the first regression, we have

\begin{equation}
E_{i,\text{own}}=\alpha+\beta_{H}\cdot H_{i}+\epsilon_{i},\label{eq:estimate_own}
\end{equation}
where $E_{i,\text{own}}$ refers to $i$'s estimate about average
accuracy of the Phase One subjects who had the same incentive level
as $i.$ Here $H_{i}\in\{0,1\}$ indicates whether $i$ is assigned
to the high-incentive treatment. 

For the second regression, we have

\begin{equation}
E_{i,j}=\alpha+\beta_{H}\cdot H_{i,j}+\epsilon_{i,j}\label{eq:estimate_all}
\end{equation}
 for $j\in\{\text{own, other}\}$, where $E_{i,\text{other}}$ is
$i$'s estimate about average accuracy of the Phase One subjects who
have a different incentive level than $i.$ Here $H_{i,\text{own}}\in\{0,1\}$
indicates whether $i$ is assigned to the high-incentive treatment
and $H_{i,\text{other}}=1-H_{i,\text{own}}.$ 

The regression results can be found in Table \ref{tab:point_estimates}. Among initial estimations, the average estimates of accuracy are 56.7\% for the low-incentive treatment and 60.4\% for the high-incentive treatment. Among all estimations, the average estimates of accuracy are 56.2\% for the low-incentive treatment and 62.0\% for the high-incentive treatment. Recall the true accuracies are 62.9\% for the low-incentive treatment and 68.9\% for the high-incentive treatment.

Subjects underestimate others' accuracy in both treatments. This is consistent with a well-documented bias of underweighted information from others in a social learning setting, which we discuss in Section \ref{s:aggregation}. Subjects' average perceptions of the effect size of  incentive on accuracy are also directionally too low. They believe this effect size to be 3.7 percentage points among the initial Phase Two estimates and 5.7 percentage points among all Phase Two estimates, but the true effect size is 6.0 percentage points. However, the underestimation of the effect size of higher incentive is not statistically significant. We conduct our pre-registered Wald test of comparing the coefficient estimate of $\beta_H$ from Equation (\ref{eq:accuracy_basic}) with the coefficient estimates of $\beta_H$ from Equations (\ref{eq:estimate_own}) and (\ref{eq:estimate_all}). We cannot reject the hypothesis that the coefficients are the same for either of these two comparisons at the 10\% level. This is partly due to the considerable heterogeneity in Phase Two subjects' beliefs about Phase One peers' accuracy, as we will discuss shortly.

We can also compare the actual and perceived informativeness of high-incentive and low-incentive guesses, which we compute by comparing log-likelihood ratios.\footnote{We define the ratio of informativeness between a signal with accuracy $q$ and a signal with accuracy $q'$ as $\frac{\log(q/(1-q))}{\log(q'/(1-q'))}$.} The actual ratio of informativeness between high-incentive and low-incentive guesses is 1.50. The averages across initial estimates of Phase Two subjects correspond to a ratio of informativeness of 1.57, while the averages across all  estimates correspond to a somewhat higher ratio of 1.95.

Underlying these average beliefs about how higher incentive affects accuracy, which are either not statistically or not substantially different from the truth, we find a wide heterogeneity in beliefs. To examine this heterogeneity, we focus on the distribution of the difference between beliefs about high-incentive accuracy and low-incentive accuracy across Phase Two subjects. It is worth pointing out that Phase Two subjects completed a series of other tasks in between  making the two estimates about accuracy under two incentive levels,  which may limit consistency across an individual's two estimates.

\begin{figure}
    \centering
    \includegraphics[width=0.7\linewidth]{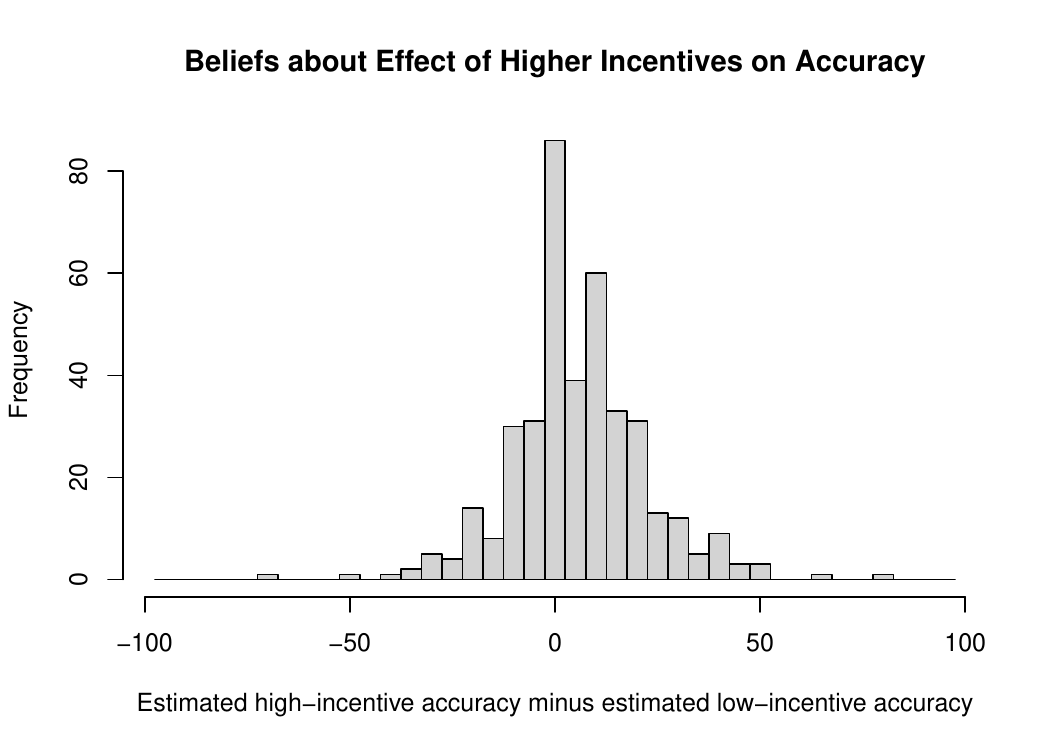}
    \caption{Histogram of the within-subject differences in Phase Two subjects' estimates about high-incentive accuracy and low-incentive accuracy in the attention task. The difference is shown in percentage points.}
    \label{fig:hist_acc_belief}
\end{figure}

\Cref{fig:hist_acc_belief} plots the histogram of these differences in percentage points. The distribution has a standard deviation of 16.5, indicating many subjects predict that guesses in the high-incentive treatment are much more accurate than those in the low-incentive treatment \emph{and} many subjects predict the opposite. Only a slight majority of subjects (55.2\%) correctly predict the \emph{sign} of the treatment effect and estimate a positive accuracy difference between the high-incentive and low-incentive treatments. We find  19.3\% of the subjects predict exactly the same accuracy in both treatments, and 25.4\% predict a negative accuracy difference between the high-incentive and low-incentive treatments.

We next turn to Phase Two subjects' aggregation of  guesses from multiple peers and find similar heterogeneity in beliefs about how accuracy differs across incentive treatments.

\subsection{Aggregation of Guesses from Different Incentive Treatments}\label{s:aggregation}

We next describe how Phase Two subjects aggregate Phase One subjects' guesses. Recall that Phase Two subjects are asked to estimate the probability of a coin-toss outcome being heads given the guesses of two Phase One subjects, one from the low-incentive treatment and one from the high-incentive treatment,  who saw grids based on that coin toss. Each Phase Two subject makes a prediction for each possible pair of low-incentive and high-incentive guesses (HH, HT, TH, and TT). Subjects spent an average of 86 seconds on these predictions.

We use responses in  these aggregation tasks to back out Phase Two subjects' average beliefs about Phase One subjects' guess accuracies under different incentive treatments. To motivate our regression specification, suppose that low-incentive guesses are correct with probability  $q_L$ and high-incentive guesses are correct with probability   $q_H$. Let $x_L$ equal $1$ if the low-incentive guess is heads and $-1$ if the low-incentive guess is tails, and define $x_H$ analogously for the high-incentive guess. Then Bayes' rule requires that the posterior probability $p$ of heads after seeing these two guesses satisfies 
$$\log\left(\frac{p}{1-p}\right) = x_L \log\left(\frac{q_L}{1-q_L}\right) + x_H \log\left(\frac{q_H}{1-q_H}\right).$$
We can use the linearity to recover average beliefs about guess accuracies. That is, we consider the following pre-registered regression:
\[
\text{LogitBelief}_{i,j}=\lambda_{L}\cdot x_{L,j}+\lambda_{H}\cdot x_{H,j}+\epsilon_{i,j}
\]
where we number the possible combinations of low-incentive and high-incentive
guesses (HH, HT, TH, TT) as $j=1,2,3,4$. $\text{LogitBelief}_{i,j}$
is the logit function $\log(\frac{p_{i,j}}{1-p_{i,j}}),$ where $p_{i,j}$ is the subject $i$'s stated probability 
that the coin toss being heads after seeing the $j$-th combination
of low-incentive and high-incentive guesses.
The regressors $(x_{H,j},x_{L,j})$ record the two guesses observed in each combination $j$.  The regression estimates for $\lambda_L$ and $\lambda_H$ are our estimators for Phase Two subjects' beliefs about  $\log\left(\frac{q_L}{1-q_L}\right)$ and $\log\left(\frac{q_H}{1-q_H}\right)$. This exercise assumes that subjects are Bayesian, and we note that we are unable to distinguish misspecified beliefs about accuracies from non-Bayesian updating that over-weights or under-weights guesses.

The regression results are in Table \ref{tab:log_belief_regression}. We estimate log-likelihood ratios of $\hat{\lambda}_L = 0.257$ for low-incentive guesses and $\hat{\lambda}_H = 0.489$ for high-incentive guesses, which correspond to precisions $\hat{q}_L = 0.564$ and $\hat{q}_H = 0.620$. These implied beliefs from Phase Two subjects' aggregation task responses are remarkably close to their directly elicited average beliefs about others' accuracies, which were  $56.2\%$ and $62.0\%$ for the low-incentive and high-incentive peers (see \Cref{s:beliefs_about_accuracy}). So aggregate updating patterns are consistent with subjects using Bayes' rule given their reported beliefs about the accuracies of others' guesses. Phase One subjects' true accuracies in the two treatments ($62.9\%$ and $68.9\%$) are both higher compared to the respective implied beliefs of the Phase Two subjects, with the difference being statistically  significant at the 1\% level using the non-linear Wald test with delta method. But the true treatment effect size of increasing incentives matches the subjects' implied beliefs about the difference in accuracies between the two treatments. 

The implied beliefs reveal that Phase Two subjects place too little weight on both low-incentive and high-incentive peers' guesses. This matches prior experiments on social learning finding subjects place too little weight on information from others (including \cite{weizsacker2010we}, \cite*{conlon2022not}, and \cite*{agranov2025behavioral}).

As in  \Cref{s:beliefs_about_accuracy}, however, we find a large heterogeneity in responses and identify many subjects who fail to even directionally put more weight on high-incentive guesses than low-incentive guesses. Each subject completes two belief-aggregation tasks where the low-incentive and high-incentive guess disagree (HT and TH). We can ask how many of these subjects place more weight on the high-incentive guess in these tasks by comparing their estimations of the heads probabilities to $50\%$. In one such task, we find that $39.4\%$ of subjects place more weight on the high-incentive guess, $33.1\%$ of subjects place equal weight on each guess, and the remaining $27.5\%$ of subjects  place more weight on the low-incentive guess. In the other task (which is symmetric), we find $43.8\%$   place more weight on the high-incentive guess, $34.5\%$  place equal weight, and $21.7\%$ place more weight on the low-incentive guess. Only $21.6\%$ of subjects place (strictly) more weight on the high-incentive guess in both tasks.

\subsection{Who Understands the Effects of Incentives?}

In Sections~\ref{s:beliefs_about_accuracy} and \ref{s:aggregation}, we saw that large fractions of subjects failed to understand even the directional effect of others' incentives for accuracy. We now ask which subjects are making these errors, and find these subjects perform worse on individual accuracy tasks.

We compare understanding of incentive effects with performance on individual attention tasks. Are subjects who are more accurate in individual attention tasks more likely to treat higher-incentive peers as more accurate? To create a measure of individual accuracy, we compute average accuracy across the two trial attention tasks at the start of the study and in the ten initial guesses from the  Attention-Substitution Task. Table \ref{tab:who_understands} regresses two binary variables on this information measure: an indicator for estimating a strictly higher accuracy in the higher incentive treatment and an indicator for putting strictly higher weight on the higher-incentive guess in both aggregation tasks where the guesses disagree (i.e., HT and TH). Higher accuracy in individual tasks predicts understanding the directional effect of incentives in both tasks ($p$-value $<.01$ for accuracy estimates and $<.05$ for information aggregation).

Higher accuracy in individual tasks likely captures effort or ability in the experiment. Toward disentangling these stories, we ask whether failing to understand incentive effects is associated with spending less time. Table \ref{tab:who_understands} shows there is not a significant correlation between time spent on individual attention tasks and understanding the sign of incentive effects. There is also not a significant correlation with time spent on the belief about accuracy  and information aggregation questions; the subjects who treat high-incentive guesses as equally or less informative are still spending substantial time assessing others' accuracies and combining others' guesses. While we cannot directly observe how this time is spent in an online experiment, the subjects who fail to understand incentive effects are spending quite a bit of time throughout the experiment. This suggests the heterogeneity in understanding incentive effects is at least partially driven by differences in ability to process information. Under this interpretation, subjects who are more able to process information tend to correctly perceive the directional effect of incentives, while subjects who are less able to process information fail to understand these effects.

Overall, we find a group of almost half of subjects who perform less well at individual learning and do not treat external sources with stronger incentives for accuracy as more reliable. We highlight two consequences of this. First, heterogeneity in estimates of the effects of incentives likely reflect genuine differences between subgroups. Second, subjects who do not predict that information sources with stronger incentives for accuracy are indeed more accurate are also worse at directly processing information. This correlation could lead to particularly poor learning outcomes for these agents in settings involving both individual and social learning.


 
\section{Combining Attention and Social Learning}

We now analyze the Attention-Substitution Task, in which subjects first pay attention to a grid to learn about a coin-flip outcome, then observe another subject's guess about the same coin flip, and finally make an incentivized binary guess. We begin by deriving theoretical predictions of leading models of attention choice in this setting, and subsequently describe our experimental results.

\subsection{Theoretical Predictions of Posterior Separable Attention Cost}\label{s:theory}

Suppose an agent starts with the uniform prior  about a binary state $\theta \in \{0,1\}$. The agent can flexibly acquire information with a posterior separable cost. That is, they choose any signal space $S$ and any state-contingent distributions over signal realizations $f(\cdot\mid\theta=0) \in \Delta(S)$ and $f(\cdot\mid\theta=1)  \in \Delta(S)$. The agent incurs an attention cost that only depends on the distribution of Bayesian posterior beliefs $\pi \in \Delta(\Delta(\{0,1\}))$ induced by $(S,f)$. This cost is $$C(\pi) = H(1/2) - \mathbb{E}_{p \sim \pi} [H(p)]$$
for some measure of uncertainty $H: [0,1] \rightarrow \mathbb{R}$. 

The agent observes a signal realization $s \sim f(\cdot \mid \theta)$ from their chosen information structure. Then, the agent observes for free an exogenous binary symmetric signal about the state with precision $q$, where $1/2 \le q \le 1$. In the experiment, Phase One participant's guess will play the role of this exogenous signal. Using the realizations of $s$ and the free binary signal, the agent chooses an action $a \in \{0,1\}$ and receives a payoff of $\mathbf{1}_{a = \theta}$. The agent maximizes the expectation of $\mathbf{1}_{a =\theta} - C(\pi)$. The idea is that the agent optimally chooses how to learn about the state in anticipation of later receiving some additional information in the form of the exogenous binary signal. Acquiring more precise information about the state is more costly, and the agent trades off the benefits of information against its cost.

Suppose that $H(\cdot)$ satisfies the following properties:
\begin{enumerate}[(1)]
\item \textbf{Concavity}: $H(\cdot)$ is strictly concave.
\item \textbf{Symmetry}: $H(x)=H(1-x)$ for all $x \in [0,1]$.
\item \textbf{Monotonicity}: $H(\cdot)$ is strictly decreasing on $[\frac12,1]$.
\end{enumerate}

\cite{denti2022posterior} shows that any behavior that can be rationalized by a posterior separable cost of information can be rationalized with a concave measure of uncertainty $H(\cdot)$. We impose the slightly stronger assumption of \emph{strict} concavity to rule out excessive indifferences. Symmetry in our experiment corresponds to the cost of acquiring information not depending on differences between the colors orange and purple, and meaningful violations would require substantial perceptual differences between the two. Monotonicity requires stronger posteriors to be more costly to acquire.

A standard uncertainty measure satisfying the three conditions above is the entropic cost \citep{sims2003implications}, defined by$$H(x)=\eta\cdot(-x\log(x)-(1-x)\log(1-x))$$
for some $\eta>0$. Our results also apply to the drift-diffusion
model with a linear waiting cost, which we now describe. There is a continuous-time
process $X_{t}=\theta\cdot t+\sigma\cdot B_{t}$ where $\sigma>0$
and $B_{t}$ is the standard Brownian motion process. The agent observes
this process until they decide to stop, paying a cost $ct$ for some
$c>0$ if they stop at time $t.$ After stopping, the agent then observes
the exogenous binary signal and takes their binary action $a$.  \cite{morris2019wald}
show that any posterior belief distribution supported in $(0,1)$
with a mean of 1/2 can be generated by some stopping time, and the
expected waiting cost associated with the cheapest way to generate
a given belief distribution is a posterior separable function of the
desired belief distribution. \cite{morris2019wald} provide an explicit expression for
the uncertainty measure $H$ associated with this cost function, and
it is straightforward to verify that this $H$ satisfies the three conditions
above.

If an agent optimally acquires information with a posterior separable cost function $H$ satisfying the three properties above, information acquisition takes a simple form: for all but at most one value of $q$, the agent either acquires a binary symmetric signal of precision $p^*>q$ and follows this signal or acquires no information and follows the exogenous signal.

We give a brief intuition. By posterior separability and symmetry, the agent's problem is equivalent to acquiring an optimal mixture of binary symmetric signals of various (observable) precisions $p$. Given such a signal, the agent follows the exogenous signal if $p<q$ and the endogenous signal if $p>q$. In the former case, the agent optimally chooses $p=\frac12$ to save information acquisition costs. In the latter case, strict concavity implies there is a unique optimal precision $p^*$. This gives the desired information acquisition pattern outside of the unique value $\hat{q}$ of $q$ for which the agent is indifferent between these two options.

This structure implies two sharp predictions:
\begin{proposition}[Attention Substitution]\label{p:post_sep}
    Suppose it is optimal for the agent to acquire $\pi \in \Delta(\Delta(\{0,1\}))$ when the exogenous signal has precision $q$ and $\pi' \in \Delta(\Delta(\{0,1\}))$ when the exogenous signal has precision $q'$. If $q' > q \geq \frac12$, then $\pi' \preceq \pi$ in the Blackwell order.
\end{proposition}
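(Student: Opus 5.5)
We would prove the proposition by first making precise the structural claim sketched just before it: every optimal $\pi$ for precision $q$ is either the degenerate distribution $\delta_{1/2}$ (no information) or the symmetric binary distribution $\pi^*$ placing mass $\frac12$ on each of $p^*$ and $1-p^*$. The same $p^*$ works for all $q$ below a threshold $\hat q$, and once the structure is in place the proposition reduces to a monotone comparative static in $q$.

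\textbf{Value of a posterior under precision $q$.} Fix $q$. After forming posterior $p$ from its own signal, the agent observes the binary signal and chooses the more likely state. The expected gross payoff conditional on posterior $p$ is
\[
V_q(p)=\max\{\,q\max(p,1-p)+(1-q)\min(p,1-p),\ \ q\,\}.
\]
So the agent follows its own information exactly when $\max(p,1-p)>q$. The problem is therefore to maximize $\mathbb{E}_\pi[V_q(p)+H(p)]-H(1/2)$ over Bayes-plausible $\pi$ with mean $\frac12$. Its value is the concave envelope of $g_q:=V_q+H$ evaluated at $\frac12$.

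\textbf{Symmetrization.} Since $g_q$ is symmetric about $\frac12$, we can replace any optimal $\pi$ by its symmetrization. The Blackwell comparison between the two original choices must still be handled, and for this we use the characterization below.

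\textbf{Optimal posteriors.} On the region $\max(p,1-p)\le q$, $g_q = q+H$ is strictly concave. Hence any optimal $\pi$ that puts mass there can be improved, unless that mass sits at $\frac12$, by merging it toward $\frac12$; this lowers cost without changing payoff, since $V_q$ is flat there.

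On the region $p>q$, $g_q(p)=p+H(p)$ up to the symmetric counterpart. So posteriors there are chosen as in the problem without the exogenous signal, restricted to $p>q$.

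The concave envelope at $\frac12$ is then attained by one of two supports:
\begin{itemize}
\item $\{\frac12\}$, with value $q+H(1/2)$;
\item a symmetric pair $\{1-p,p\}$ with $p>q$ maximizing $p+H(p)$, or a mixture of $\frac12$ with such a pair.
\end{itemize}
Strict concavity of $p+H(p)$ on $[\frac12,1]$ gives a unique unconstrained maximizer $p^{**}$. Comparing the value of the pair $\pm p$ with the value from putting weight on $\frac12$ gives the optimal structure.

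\textbf{Main obstacle: ruling out mixtures.} The step I expect to be hardest is excluding optimal mixtures of $\frac12$ with $\{1-p,p\}$, and handling the constraint $p>q$ when $p^{**}\le q$. The plan is to exploit the jump in $g_q$. Moving a posterior from $q^-$ to $q^+$ changes $V_q$ continuously: at $p=q$ both branches equal $q$, so $V_q$ is continuous and convex in the payoff part. Thus $g_q$ is the sum of a piecewise-linear convex function and a strictly concave $H$.

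Next, write the value of mixing weight $\alpha$ on the pair $\{1-p,p\}$ and $1-\alpha$ on $\frac12$:
\[
\alpha\,[p+H(p)]+(1-\alpha)\,[q+H(1/2)].
\]
This is linear in $\alpha$, so except at indifference an optimum sets $\alpha\in\{0,1\}$. At indifference any $\alpha$ is optimal, which is the single exceptional $\hat q$.

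Hence the optimal value is $\max\{q+H(1/2),\ \sup_{p>q}[p+H(p)]\}$. As $q$ rises, the first term strictly increases while the second weakly decreases. Therefore:
\begin{itemize}
\item there is a threshold $\hat q$, with $\pi^*$ (at $p^*=p^{**}$) optimal for $q<\hat q$ and $\delta_{1/2}$ optimal for $q>\hat q$;
\item we need to check $p^{**}>\hat q$, so that the constraint never binds while the pair is optimal, which follows because at $q=p^{**}$ the pair yields exactly $q+H(p^{**})<q+H(1/2)$ by monotonicity of $H$;
\item at $q=\hat q$, the optimal set consists of mixtures of $\delta_{1/2}$ and $\pi^*$.
\end{itemize}

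\textbf{Conclusion.} For $q'>q$, every optimal $\pi$ at $q$ is a mixture $\alpha\pi^*+(1-\alpha)\delta_{1/2}$, with $\alpha=1$ if $q<\hat q$. Every optimal $\pi'$ at $q'$ is such a mixture with $\alpha'=0$ if $q'>\hat q$. In all cases $\alpha'\le\alpha$, with equality possible only when both choices lie on the same side of $\hat q$, where the choice is unique. The family $\alpha\pi^*+(1-\alpha)\delta_{1/2}$ is Blackwell-ordered by $\alpha$, since it is a garbling that replaces $\pi^*$ by no information with probability $1-\alpha$. This gives $\pi'\preceq\pi$.
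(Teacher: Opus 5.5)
Your proposal is correct and follows essentially the same route as the paper's proof: symmetrize, show every optimum is $\delta_{1/2}$, the symmetric pair at the unique maximizer $p^*$ of $p+H(p)$ (the paper's $p-C(p)$), or a mixture of the two exactly at $\hat q=p^*-C(p^*)$, and then order the mixtures $\alpha\pi^*+(1-\alpha)\delta_{1/2}$ by garbling. One slip to fix: your displayed $V_q$ is wrong, because the first branch should be just $\max(p,1-p)$. As written, $q\max(p,1-p)+(1-q)\min(p,1-p)\le q$ always holds, which would make $V_q\equiv q$. Everything after that display already uses the correct $V_q(p)=\max\{p,1-p,q\}$, so the argument goes through once the display is corrected.
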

Increasing the precision $q$ of the exogenous signal leads to the agent acquiring (weakly) less information and paying a lower cost of information acquisition. For low $q$, the agent acquires and follows a binary symmetric signal of fixed precision. When $q$ crosses the threshold $q^*$, the agent switches to paying zero attention cost and following the exogenous signal.

\begin{proposition}[No Integration]\label{p:no_integration}
    For all except possibly one value of $q$, any optimal attention choice with a strictly positive cost has the property that the agent's optimal action does not depend on the realization of the exogenous signal. 
\end{proposition}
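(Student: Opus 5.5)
The plan is to reduce the agent's problem to a choice over distributions of posteriors and then to analyze the value of each posterior separately. Write the agent's interim belief after the endogenous signal as $p \in [0,1]$, the probability that $\theta=1$. After observing the exogenous binary signal of precision $q$, the agent's final posterior is obtained by Bayes' rule. The optimal action then yields the gross value
\[
V_q(p)=\tfrac12\Big(\max\{pq,(1-p)(1-q)\}+\max\{p(1-q),(1-p)q\}\Big)\cdot 2,
\]
that is, $V_q(p)=\max\{pq,(1-p)(1-q)\}+\max\{p(1-q),(1-p)q\}$. Here the two terms correspond to the two exogenous signal realizations. By posterior separability the problem becomes
\[
\max_{\pi:\ \mathbb{E}_\pi[p]=1/2}\ \mathbb{E}_{p\sim\pi}\big[V_q(p)+H(p)\big]-H(1/2).
\]
This is the concavification of $W_q:=V_q+H$ evaluated at $1/2$.

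The first step is to compute $V_q$ explicitly. For $p\in[1-q,q]$ the agent follows the exogenous signal, so $V_q(p)=q$, constant. For $p\ge q$ the agent follows their own signal, so $V_q(p)=p$, and symmetrically $V_q(p)=1-p$ for $p\le 1-q$. Thus $V_q(p)=\max\{q,p,1-p\}$, and the action depends on the exogenous realization exactly when $p\in(1-q,q)$. The endpoints $p=q$ and $p=1-q$ are ties and can be assigned to either action.

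The second step is to argue that any optimal $\pi$ with positive cost places no mass on $(1-q,q)$, except possibly at one value of $q$. Suppose $\pi$ puts mass on $(1-q,q)$. On this interval $W_q=q+H$ is strictly concave. Replacing that mass by a point mass at its conditional mean strictly increases the objective, unless the mass is already a single point. Symmetrizing $\pi$ (averaging it with its reflection $p\mapsto 1-p$) keeps the objective the same, since $H$ and $V_q$ are symmetric. Combining this symmetrization with strict concavity of $W_q$ on $[1-q,q]$, the optimum has one of two forms. It is either the point mass at $1/2$, which has zero cost, or a distribution supported on $[0,1-q]\cup[q,1]$. By strict concavity of $p+H(p)$ on $[1/2,1]$, the latter reduces to a symmetric binary distribution on $\{1-p^*,p^*\}$ with $p^*\ge q$.

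The remaining difficulty, and the main obstacle, is to handle mixed supports. In principle a distribution could place mass on the interior point $1/2$ together with mass outside the interval, giving positive cost while the action sometimes depends on the exogenous signal. Such a mixture $\alpha\delta_{1/2}+(1-\alpha)\pi_{\mathrm{out}}$ is linear in $\alpha$. It is therefore optimal with $\alpha\in(0,1)$ only if the pure options tie, that is, only if
\[
q+H(1/2)=\max_{p\ge q}\,[p+H(p)].
\]
The left side minus the right side is strictly increasing in $q$: the left side has slope $1$, and the right side has slope $0$ where the maximizer is interior, while it increases no faster where $p=q$ binds. Monotonicity of $H$ gives this. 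Hence the tie occurs for at most one value $\hat q$. For every other $q$, any optimal choice with positive cost is supported on $[0,1-q]\cup[q,1]$. There the action is determined by $p$ alone, giving the claimed independence from the exogenous realization.

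Tie-breaking at $p=q$ exactly also needs care. Here I would use strict monotonicity to show that the optimal $p^*$ is strictly above $q$ except possibly at the single threshold value as well.
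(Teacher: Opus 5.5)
Your proof is correct, but it takes a somewhat different route from the paper's.

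\textbf{The paper's route.} The paper first restricts to posterior distributions symmetric about $1/2$. This amounts to choosing an unrestricted distribution on $[1/2,1]$, and the paper then maximizes the pointwise payoff $\max(q,p)-C(p)$ there. On $[1/2,q]$ this payoff equals $q-C(p)$ and is uniquely maximized at $1/2$. On $[q,1]$ it is bounded by the \emph{unconstrained} maximum $p^*-C(p^*)$, where $p^*$ does not depend on $q$. This yields the explicit threshold $\hat q=p^*-C(p^*)$ and a full characterization (the Lemma), from which Proposition~\ref{p:post_sep} and Proposition~\ref{p:no_integration} both follow. Asymmetric optima are excluded afterwards by symmetrizing and appealing to uniqueness.

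\textbf{Your route.} You work with an arbitrary optimal $\pi$ and proceed in three steps.
\begin{enumerate}[(i)]
\item Strict Jensen on $(1-q,q)$, combined with symmetrization, forces any interior mass into an atom at exactly $1/2$. You should say explicitly that this is why $\pi_{\mathrm{out}}$ still has mean $1/2$.
\item Linearity in the mixing weight then shows that interior mixing requires the tie $q+H(1/2)=\max_{p\ge q}[p+H(p)]$.
\item Monotonicity in $q$ makes that tie unique.
\end{enumerate}
This is tailored to the no-integration claim. It makes transparent that the exceptional $q$ is exactly where mixing with no information can be optimal.

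\textbf{What the paper's observation saves.} Your route costs you the monotonicity step and the separate boundary case $p=q$. Both become immediate with the fact implicit in the paper, namely $\hat q\le p^*$. If $p>1/2$ and $p+H(p)\ge q+H(1/2)$, then $H(p)<H(1/2)$ forces $p>q$. So the constraint $p\ge q$ never binds whenever acquiring information is weakly optimal. Your tie value is then exactly the paper's $\hat q$, and every positive-cost optimum has precision strictly above $q$.

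\textbf{Minor points.} Your statement that the optimum is ``either the point mass at $1/2$ or supported on $[0,1-q]\cup[q,1]$'' holds only after the mixture step. Also, the right side of your tie equation is nonincreasing in $q$ simply because the feasible set $[q,1]$ shrinks, which is cleaner than the slope argument.
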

The result  states that under optimal attention, there should be no integration between the information that the agent acquires through attention and the exogenous signal that the agent sees. Either the agent pays zero cost of information acquisition and fully relies on the exogenous signal, or the agent pays some non-zero amount and always follows their endogenously acquired information. Intuitively, agents do not pay anything for information unless it will fully determine their decision.

\subsection{Results from Attention-Substitution Task}

\subsubsection{Basic Patterns}

We begin by considering how subjects combine their own preliminary guess with a peer's guess. We will then turn in the following subsections to the question of when subjects pay attention.

When a subject's preliminary guess agrees with the peer's guess, they follow these guesses about 99\% of the time in each treatment. The more interesting case is when their preliminary guess disagrees with the peer's guess. Subjects change their initial guess less frequently when it is contradicted by the guess of a low-incentive peer than when it is contradicted by a high-incentive peer. To see this, we restrict attention to the rounds where the subject's initial guess based on the grid differs from the peer's guess revealed on the next screen, and we regress whether the subject changes their final incentivized guess  on the incentive level of the peer (column (3) of Table \ref{tab:anticipation}). We find that subjects change their guesses about 21\% less often when they disagree with a low-incentive peer than a high-incentive peer, a difference significant at the 10\% level. This suggests that after signals have realized, subjects' aggregate behavior does account for the effect of incentives on accuracy.

Subjects' final guesses are also more accurate when their peer has a higher incentive (73\% versus 68\%). Subjects in both treatments are obtaining higher accuracy than the peers' guesses they rely on, and the magnitude of improvement is also comparable across the treatments.

\subsubsection{Subjects Violate Attention Substitution}

We compare the attention choices of subjects who are paired with  high-incentive  peers with the attention choices of those  who are paired with low-incentive  peers. In contrast to the prediction from Proposition \ref{p:post_sep}, we do not find that subjects pay more attention to the grid when they anticipate that they will observe a guess from a peer with a lower incentive treatment (which is a less precise binary signal about the state). Table \ref{tab:anticipation} shows that, measured either in terms of the duration that subjects spent looking at the grid or in terms of their initial guess accuracy (where they could only use what they learned from their own attention),  subjects actually pay \emph{less} attention directionally when they face a peer with a lower incentive level. This difference is even statistically significant at the 5\% level for attention duration. Restricting to subjects who correctly estimate the directional effect of incentive on peers' accuracy (Table \ref{tab:anticipation_directionallycorrect}), we continue to find that subjects pay less attention directionally when facing lower-incentive peers, though the difference is no longer statistically significant.

Why do we see subjects pay less attention when learning from a peer with a lower incentive level? As described above, this finding cannot be explained by models of rational inattention with posterior-separable costs of information acquisition. It is possible to rationalize this behavior with other costs of information acquisition. For example, suppose that agents choose between a low-cost information structure which guarantees a moderately strong posterior (e.g., a single binary, symmetric signal) and a high-cost information structure which can give either a weak posterior or a strong posterior (e.g., two binary, symmetric signals). Then the agent may optimally choose the low-cost information structure with an imprecise exogenous signal and the high-cost information structure with a precise exogenous signal: the possibility of a weak posterior is less concerning when the exogenous signal is more informative. We do not see an obvious reason why the attentional task in our experiment would generate this type of cost structure, however, and it is not clear whether more natural cost structures can rationalize our finding.

An alternate explanation is that subjects' behavior does not match rational inattention models with any information costs. One channel could be cognitive. The effect of incentive on peer's accuracy and the required adjustment to one's strategy may be harder to think through prospectively (before the realization of any information) than retrospectively (after paying attention, having extracted some information through attention and having seen the peer's binary guess). Ex-post, subjects paired with a high-incentive peer may understand that they should weigh the peer's guess more heavily and change their own initial guess when they find that it is contradicted by the peer. But ex-ante, at the moment of choosing how much attention they should pay to the grid, subjects may not be able to reason through the various possible contingencies and understand how their optimal attention choice should depend on the peer's incentive. The effect could also be explained via coordination motives or competitive preferences. If subjects want their unincentivized guesses to agree with Phase One peers, they will have a stronger incentive to acquire information when those peers are more accurate. Similar effects could also arise if subjects want to be more accurate than their Phase One peers. 

A final explanation could come from the experimental protocols. To minimize experimenter demand effects, we exposed subjects to only one incentive treatment until tasks requiring introducing the other treatment. As a result, Phase Two subjects who were paired with high-incentive peers for the attention choice in anticipation of social information task also faced the high incentive in the two initial trial tasks (and similarly for low incentives). It seems plausible that the trial tasks anchor effort levels and that this impacts effort in later parts of the experiment.\footnote{To  minimize such effects, we included a second round of comprehension questions before the Attention-Substitution Task. One of these questions required participants to report the $\$0.50$ bonus payment for correct guesses in this task.} We note that the direction of any interference between parts of the experiment is not obvious, however. In particular, the attention choice in anticipation of social information task has an intermediate $\$0.50$ incentive. This incentive may be more motivating for subjects who could earn $\$0.05$ per trial task than those who could earn $\$1.25$ per trial task.


\subsubsection{Subjects Violate No Integration}

While we cannot directly observe whether subjects incurred a non-zero attention cost, we can fairly confidently classify subjects who spent at least ten seconds on the grid as having paid some cost of attention, for they could have saved time by immediately advancing to the next screen. Proposition \ref{p:no_integration} predicts that these subjects should never change their initial guess, even when it is contradicted by the peer's answer on the next screen. By contrast, we find that around 20\% of these subjects switch their initial guess when the peer's answer disagrees with it. This rate does not change much between the first five repetitions and last five repetitions of the task (20.7\% versus 20.1\%), and is also similar among the subset of subjects who correctly identify the directional effect
of incentives on accuracy (19.3\%).

We also ask how many of the subjects who switch guesses spent substantial time on the attention task. We find that among the subjects who switch their initial guess when contradicted by the peer's answer, about 44\% spent ten or more seconds on the grid. This proportion is 53.1\% in the first five repetitions of the task and 36.1\% in the last five repetitions of the task, which may relate to subjects spending less time paying attention to the grid in later repetitions. 

This behavior violates models of rational inattention with posterior-separable costs, but is easier to rationalize with other attention costs. To explain why, suppose that an agent acquires information dynamically (e.g., as in the drift-diffusion model described in Section \ref{s:theory}) but the costs of information increase over time. Then it can be rational to acquire information for some time but then stop and follow the exogenous belief if a strong posterior is not reached. In the experiment, this could correspond to attempting to count the colored balls but giving up and following the peer's signal (rather than trying to count again) if this counting does not yield a clear answer. Beyond the drift-diffusion model, we conjecture that violations of No Integration can be rationalized within recently introduced families of cost functions generalizing posterior separability (iteratively differentiable cost functions, defined by \cite{lipnowski2022predicting}, and $f$-information, defined by \cite*{bloedel2025modeling}).

\section{Concluding Discussion}

This paper studies a social-learning setting where people rely on intermediaries who choose how much attention to pay to their available information, with these attention choices shaped by their incentives. In our experiment, variations in incentives generate meaningful differences in subjects' accuracy, but few subjects consistently understand even the direction of these incentive-driven differences. Many subjects do not respond to others' incentives for accuracy in a variety of tasks: estimation tasks where subjects directly predict  others' accuracy, aggregation tasks where subjects must combine conflicting answers from multiple peers, and social-learning tasks where subjects decide how much attention to pay themselves in anticipation of receiving social information from a peer. The misunderstanding of incentive effects documented in our experiments may cause learners to underweight high-quality sources and potentially to misallocate their own attention.

\bibliography{attention}

\appendix

\section{Proofs}

This section will prove \Cref{p:post_sep} and \Cref{p:no_integration}. We begin with a lemma characterizing optimal posteriors.

\begin{lemma}
There exists $p^*$ independent of $q$ and a threshold $\hat{q}$ such that the optimal information structure is a binary symmetric signal of precision $p^*$ when $q < \hat{q}$, mixes between acquiring a binary symmetric signal of precision $p^*$ and acquiring no information when $q=\hat{q}$, and acquires no information when $q>\hat{q}$.
\end{lemma}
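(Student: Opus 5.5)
The plan is to reduce the problem to choosing a distribution of posteriors $\pi$ with mean $1/2$, and then to exploit symmetry and concavity. The structure is to compute the value of each posterior once the free signal has been observed, symmetrize, split posteriors into those the agent follows and those it ignores, and then compare the two candidate policies.

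\textbf{Value of a posterior.} Fix a posterior $x=\Pr(\theta=1)$ from the agent's own signal. Because the exogenous signal is binary, symmetric and conditionally independent, the expected gross payoff after seeing it is
\[
v_q(x)=\max\{x,\,1-x,\,q\}.
\]
If the agent follows the exogenous signal, it is correct with probability $q$ whatever $x$ is. Otherwise it follows its own posterior, which is correct with probability $\max\{x,1-x\}$; here one checks that a mixed rule cannot do better. The agent's problem is therefore
\[
\max_{\pi:\ \mathbb{E}_\pi[x]=1/2}\ \mathbb{E}_\pi\bigl[v_q(x)+H(x)\bigr]-H(1/2).
\]
This is a concavification problem for the symmetric function $g_q=v_q+H$ at the point $1/2$.

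\textbf{Symmetrization and reduction to binary signals.} Because $g_q$ is symmetric, replacing $\pi$ by its mirror image $x\mapsto 1-x$ leaves the objective unchanged. Mixing $\pi$ with its mirror image therefore preserves optimality. Any symmetric $\pi$ is a mixture of symmetric pairs $\{x,1-x\}$ with $x\ge 1/2$, and each pair is a binary symmetric signal of precision $x$. The value of the problem is thus the maximum over mixtures of $\phi_q(x)=g_q(x)$ for $x\in[1/2,1]$, so the optimum puts all its mass on maximizers of $\phi_q$ on $[1/2,1]$.

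\textbf{Splitting into two regions.} On $[1/2,q]$ we have $\phi_q=q+H(x)$. By Monotonicity this is uniquely maximized at $x=1/2$, which gives the value $q+H(1/2)$ (no information). On $[q,1]$ we have $\phi_q=x+H(x)$, which is strictly concave and does not depend on $q$. Let $p^*$ be its unique maximizer on $[1/2,1]$, which exists by strict concavity and continuity, and let $V^*=p^*+H(p^*)$. I expect the main subtlety to lie here: if $p^*<q$, the constrained maximum over $[q,1]$ is at $x=q$, with value $q+H(q)<q+H(1/2)$, so it is dominated. It follows that the only candidates are $1/2$ and $p^*$, and $p^*$ is feasible only if $p^*\ge q$.

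\textbf{The threshold.} Compare the two candidate values, $q+H(1/2)$ and $V^*$. The first is strictly increasing in $q$ and the second is constant. Define
\[
\hat q=V^*-H(1/2)=p^*+H(p^*)-H(1/2).
\]
Since $H(p^*)\le H(1/2)$ by Monotonicity, $\hat q\le p^*$; if $p^*=1/2$ then $\hat q=1/2$, and the case $q<\hat q$ does not arise.
\begin{itemize}
\item When $q<\hat q$, precision $p^*$ is strictly better, and $q<\hat q\le p^*$ ensures it is feasible.
\item When $q>\hat q$, no information is strictly better.
\item When $q=\hat q$, the two are tied, so any mixture of them is optimal.
\end{itemize}

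Finally, uniqueness needs a short argument. Optimal $\pi$ must be supported on the set of maximizers. For a symmetric optimum this is immediate. For a non-symmetric optimum, note that each posterior $x$ in its support must have $\max(x,1-x)\in\arg\max\phi_q$, because the concavification value equals $\max\phi_q$ and the objective is the $\pi$-average of $\phi_q(\max(x,1-x))$. The mean constraint is satisfiable, so the resulting support is $\{p^*,1-p^*\}$, possibly together with $1/2$, and this is what the lemma describes.
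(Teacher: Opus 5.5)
Your proof is correct and follows the paper's argument. Both reduce to symmetric posterior distributions, maximize $\max(p,q)+H(p)$ (equivalently $\max(p,q)-C(p)$ with $C(p)=H(1/2)-H(p)$) over $p\in[1/2,1]$, take $p^*$ as the unique maximizer of the strictly concave $p-C(p)$, set $\hat q=p^*-C(p^*)$, and rule out asymmetric optima by symmetrization. Your explicit check that $\hat q\le p^*$ and your direct pointwise argument for asymmetric optima are slightly more careful than the paper's terse version; in that last step, though, what forces equal weight on $p^*$ and $1-p^*$ is the mean-$1/2$ constraint itself, not just the fact that it can be satisfied.
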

\begin{proof}
Suppose we are given $H:[0,1]\to\mathbb{R}$  satisfying properties (1)-(3) from Section~\ref{s:theory}.
Then $C$ satisfies the following properties:
\begin{enumerate}[(i)]
\item $C(1/2)=0,$
\item $C$ is symmetric around 1/2, and
\item $C$ is strictly increasing and strictly convex on $[1/2,1].$ 
\end{enumerate}

Consider first the problem of maximizing expected utility over information
structures that induce a symmetric distribution of posteriors around
1/2. The expected payoff given a distribution of posteriors $P\in\Delta([0,1])$
is $\mathbb{E}_{p\sim P}[\max(q,\max(p,1-p))-C(p)]$. Let $p^{*}$
solve $\max_{p\in[1/2,1]}\{p-C(p)\}.$ Since $p-C(p)$ is strictly
concave, there is a unique maximizer $p^{*}$ in $[1/2,1].$ 

Suppose $q\le p^{*}-C(p^{*})$. Then it is suboptimal for $P$ to put positive
weight on $(1/2,q]$, and also suboptimal to put positive weight on
any $p\in(q,1]$ with $p\ne p^{*}$, since for such $p$ we would
have $\max(q,\max(p,1-p))-C(p)=p-C(p)<p^{*}-C(p^{*}).$ If the inequality $q\le p^{*}-C(p^{*})$
is strict, the only solution is
for $P$ to put probability 1/2 on $p^{*}$ and 1/2 on $1-p^{*}$.
If we have $q=p^{*}-C(p^{*})$, then the solutions are for $P$ to
put some probability on 1/2 and the rest divided evenly on $p^{*}$
and $1-p^{*}$. 

Suppose $q>p^{*}-C(p^{*})$. Then again it is suboptimal for $P$
to put any weight on $(1/2,q]$, and it is also suboptimal to put
any weight on any $p>q$, because $\max(q,\max(p,1-p))-C(p)=p-C(p)\le p^{*}-C(p^{*})<q$.
So the only solution is for $P$ to put probability 1 on 1/2. We have shown that in the constrained problem where the agent only maximizes over
the class of information structures that induce a symmetric distribution
of posteriors around 1/2, the statement of the lemma holds with $\hat{q} = p^*-C(p^*)$.

Now suppose the agent is also allowed to choose information structures
that induce asymmetric distributions of posteriors around 1/2. Given
any asymmetric distribution $\tilde{P}$ of posteriors, consider the
feasible, symmetric distribution of posterior $P$ defined by $P(p)=\frac{1}{2}\tilde{P}(p)+\frac{1}{2}\tilde{P}(1-p).$
By the symmetry of $H$ and the symmetry of the decision problem,
$P$ and $\tilde{P}$ give the same expected utility. Suppose that $q<p^{*}-C(p^{*})$. If there is an asymmetric $\tilde{P}$ inducing a distribution over posteriors not equal to probability 1/2 on $p^{*}$ and probability 1/2 on $1-p^{*}$,
the constructed $P$ would be a symmetric solution to the agent's
problem which puts positive probability on some posterior belief other
than $p^{*}$ in $[1/2,1].$ This contradicts the uniqueness of the
solution among the class of symmetric belief distributions. The similar
argument also rules out asymmetric solutions in the cases of $q=p^{*}-C(p^{*})$
and $q>p^{*}-C(p^{*})$. Therefore there are no strictly asymmetric
solutions to the agent's problem. 
\end{proof}

The proofs of the propositions are easy corollaries.

\begin{proof}[Proof of \Cref{p:post_sep}]
If $q > \hat{q},$ then the agent acquires no information when the exogenous signal has precision $q$ and so the result is immediate.

If $q = \hat{q}$, the agent mixes between acquiring no information and acquiring a binary signal of precision $p^*$. For any $q' > \hat{q}$, the agent acquires no information.

If $q < \hat{q}$, the agent acquires binary signals of precision $p^*$ with exogenous signals of precisions $q$. With precision $q' > q$, the agent acquires some mixture of a binary signal of precision $p^*$ and no information. So the information structure with precision $q$ is weakly more Blackwell informative than the information structure with precision $q'$.
\end{proof}

\begin{proof}[Proof of \Cref{p:no_integration}]
If $q > \hat{q}$, the information cost is zero. If $q < \hat{q}$, the agent chooses an action determined by the realization of the signal they acquire, so the action does not depend on the exogenous signal. The remaining value $q=\hat{q}$ is the possible exception in the statement.
\end{proof}

\section{Additional Tables}

\begin{table}[h!]
\centering
\begin{talltblr}[         
caption={Time spent on Phase One tasks},
label={tab:reg_duration},
note{}={Standard errors clustered at the subject level.},
]                     
{                     
colspec={Q[]Q[]},
column{2}={}{halign=c,},
column{1}={}{halign=l,},
hline{8}={1-2}{solid, black, 0.05em},
}                     
\toprule
& Duration \\ \midrule 
Constant & \num{28.27}*** \\
& (\num{1.35}) \\
Round & \num{-0.95}*** \\
& (\num{0.05}) \\
High Incentive & \num{12.07}*** \\
& (\num{1.96}) \\
N & \num{11938} \\
$R^2$ & \num{0.07} \\
Adj. $R^2$ & \num{0.07} \\
\bottomrule
\end{talltblr}
\end{table}

\begin{table} 
\centering
\begin{talltblr}[         
caption={Regressions of accuracy in Phase One attention tasks.},
 label = {tab:accuracy_five_specs},
note{}={SEs clustered at the subject level.},
]                     
{                     
colspec={Q[]Q[]Q[]Q[]Q[]Q[]},
column{2-6}={}{halign=c,},
column{1}={}{halign=l,},
}                     
\toprule
 & \SetCell[c=5]{c} Accuracy in Phase One Attention Tasks \\
& (1) & (2) & (3) & (4) & (5) \\ \midrule 
Constant & \num{0.629}*** & \num{0.627}*** & \num{0.538}*** & \num{0.503}*** &  \\
& (\num{0.009}) & (\num{0.012}) & (\num{0.011}) & (\num{0.012}) &  \\
High incentive & \num{0.060}*** & \num{0.058}*** & \num{0.020}* & \num{0.013} &  \\
& (\num{0.014}) & (\num{0.014}) & (\num{0.011}) & (\num{0.011}) &  \\
Duration &  &  & \num{0.003}*** & \num{0.006}*** & \num{0.001}*** \\
&  &  & (\num{0.000}) & (\num{0.000}) & (\num{0.000}) \\
Duration$^2$ &  &  &  & \num{-0.000}*** &  \\
&  &  &  & (\num{0.000}) &  \\
Round &  & \num{0.000} & \num{0.003}*** & \num{0.004}*** & \num{0.001} \\
&  & (\num{0.001}) & (\num{0.001}) & (\num{0.001}) & (\num{0.001}) \\
\midrule
N & \num{12000} & \num{11938} & \num{11938} & \num{11938} & \num{11938} \\
$R^2$ & \num{0.004} & \num{0.004} & \num{0.044} & \num{0.052} & \num{0.703} \\
Adj. $R^2$ & \num{0.004} & \num{0.004} & \num{0.044} & \num{0.052} & \num{0.687} \\
Subject FE & No & No & No & No & Yes \\
\bottomrule
\end{talltblr}
\end{table}

\begin{table} 
\centering
\begin{talltblr}[         
caption={Phase Two subjects' estimates of Phase One guess accuracies},
label = {tab:point_estimates},
note{}={(1): Only using subjects' estimates about peers in the same incentive treatment.},
note{ }={(2): Using subjects' estimates about peers in both treatment groups, SEs clustered at the subject level.},
]                     
{                     
colspec={Q[]Q[]Q[]},
column{2-3}={}{halign=c,},
column{1}={}{halign=l,},
hline{6}={1-3}{solid, black, 0.05em},
}                     
\toprule
& (1) & (2)  \\ \midrule 
Constant & \num{0.567}*** & \num{0.562}*** \\
& (\num{0.012}) & (\num{0.009}) \\
High incentive & \num{0.037}* & \num{0.057}*** \\
& (\num{0.019}) & (\num{0.008}) \\
N & \num{393} & \num{786} \\
$R^2$ & \num{0.010} & \num{0.022} \\
Adj. $R^2$ & \num{0.007} & \num{0.021} \\
\bottomrule
\end{talltblr}
\end{table}

\begin{table}
\centering
\begin{talltblr}[         
caption={Estimated Beliefs about Log-likelihoods},
label  ={tab:log_belief_regression},
note{}={Standard errors clustered at the subject level. To make belief log-likelihood ratio well defined, beliefs of 0 and 1 are changed to 0.001 and 0.999.},
]                     
{                     
colspec={Q[]Q[]},
column{2}={}{halign=c,},
column{1}={}{halign=l,},
hline{6}={1-2}{solid, black, 0.05em},
}                     
\toprule
& Belief Log-likelihood \\ \midrule 
Low Incentive Log-likelihood & \num{0.257}*** \\
& (\num{0.030}) \\
High Incentive Log-likelihood & \num{0.489}*** \\
& (\num{0.049}) \\
N & \num{1572} \\
$R^2$ & \num{0.12} \\
Adj. $R^2$ & \num{0.12} \\
\bottomrule
\end{talltblr}
\end{table}

\begin{table}
\centering
\begin{talltblr}[         
caption={Effect of Peer Incentive on Own Attention and Guess Aggregation},
label = {tab:anticipation},
note{}={SEs clustered at the subject level.},
note{ }={Columns (1) and (2) exclude rounds where subject spent more than 225 seconds on the grid.},
]                     
{                     
colspec={Q[]Q[]Q[]Q[]},
column{2-4}={}{halign=c,},
column{1}={}{halign=l,},
hline{6}={1-4}{solid, black, 0.05em},
}                     
\toprule
& (1) Attention Duration (s) & (2) Initial Guess Accuracy & (3) Prob of Changing Guess \\ \midrule 
Constant & \num{25.936}*** & \num{0.676}*** & \num{0.220}*** \\
& (\num{2.007}) & (\num{0.014}) & (\num{0.022}) \\
High Peer Incentive & \num{6.443}** & \num{0.008} & \num{0.057}* \\
& (\num{2.953}) & (\num{0.021}) & (\num{0.034}) \\
N & \num{3915} & \num{3915} & \num{1709} \\
$R^2$ & \num{0.009} & \num{0.000} & \num{0.004} \\
Adj. $R^2$ & \num{0.008} & \num{-0.000} & \num{0.004} \\
\bottomrule
\end{talltblr}
\end{table}

\begin{table}
\centering
\begin{talltblr}[         
caption={Effect of Peer Incentive on Own Attention and Guess Aggregation, Among Subjects with Directionally Correct Estimates of the Effect of Incentive on Accuracy},
label = {tab:anticipation_directionallycorrect},
note{}={SEs clustered at the subject level.},
note{ }={Columns (1) and (2) exclude rounds where subject spent more than 225 seconds on the grid.},
]                     
{                     
colspec={Q[]Q[]Q[]Q[]},
column{2-4}={}{halign=c,},
column{1}={}{halign=l,},
hline{6}={1-4}{solid, black, 0.05em},
}                     
\toprule
& (1) Attention Duration (s) & (2) Initial Guess Accuracy & (3) Prob of Changing Guess \\ \midrule 
Constant & \num{29.205}*** & \num{0.697}*** & \num{0.234}*** \\
& (\num{2.925}) & (\num{0.019}) & (\num{0.032}) \\
High Peer Incentive & \num{6.076} & \num{0.010} & \num{0.031} \\
& (\num{4.022}) & (\num{0.028}) & (\num{0.047}) \\
N & \num{2163} & \num{2163} & \num{915} \\
$R^2$ & \num{0.007} & \num{0.000} & \num{0.001} \\
Adj. $R^2$ & \num{0.007} & \num{-0.000} & \num{0.000} \\
\bottomrule
\end{talltblr}
\end{table}

\begin{table}
\centering
\begin{talltblr}[         
caption={Determinants of Phase Two Subjects Making Directionally Correct Judgments},
label={tab:who_understands},
note{}={Each column is a separate univariate regression of the column outcome on a single regressor. Standard errors (in parentheses) are heteroskedasticity-robust (HC1). Attention Task Accuracy is average accuracy in the two trial attention tasks and in the initial guesses for the ten repetitions of the attention substitution task. Attention Task Duration is the average number of seconds spent on these attention tasks. Judgment Task Duration is the average number of seconds spent on the forecasting others' accuracy under different incentive levels for columns (1) to (3) or the number of seconds spent on aggregating the two Phase One subjects' guesses for columns (4) to (6).},
]                     
{                     
colspec={Q[]Q[]Q[]Q[]Q[]Q[]Q[]},
column{2-7}={}{halign=c,},
column{1}={}{halign=l,},
hline{11}={1-7}{solid, black, 0.05em},
}                     
\toprule
 & \SetCell[c=3]{c} Correct Estimate of Incentive Effect Sign & & & \SetCell[c=3]{c} Directionally Correct Aggregations & & \\
\cmidrule[lr]{2-4}\cmidrule[lr]{5-7}
& (1) & (2) & (3) & (4) & (5) & (6) \\ \midrule 
Constant & \num{0.302}*** & \num{0.508}*** & \num{0.509}*** & \num{0.065} & \num{0.197}*** & \num{0.177}*** \\
& (\num{0.092}) & (\num{0.037}) & (\num{0.039}) & (\num{0.075}) & (\num{0.028}) & (\num{0.032}) \\
Attention Task Accuracy & \num{0.368}*** &  &  & \num{0.222}** &  &  \\
& (\num{0.128}) &  &  & (\num{0.111}) &  &  \\
Attention Task Duration &  & \num{0.001} &  &  & \num{0.001} &  \\
&  & (\num{0.001}) &  &  & (\num{0.001}) &  \\
Judgment Task Duration &  &  & \num{0.001} &  &  & \num{0.000} \\
&  &  & (\num{0.001}) &  &  & (\num{0.000}) \\
N & \num{393} & \num{393} & \num{393} & \num{393} & \num{393} & \num{393} \\
$R^2$ & \num{0.020} & \num{0.007} & \num{0.005} & \num{0.011} & \num{0.002} & \num{0.006} \\
Adj. $R^2$ & \num{0.018} & \num{0.005} & \num{0.003} & \num{0.008} & \num{-0.001} & \num{0.004} \\
\bottomrule
\end{talltblr}
\end{table}

\clearpage
\section{Experimental Instructions and Interface}\label{sec:interface}

\begin{figure}[H]
    \centering
    \includegraphics[width=0.6\linewidth]{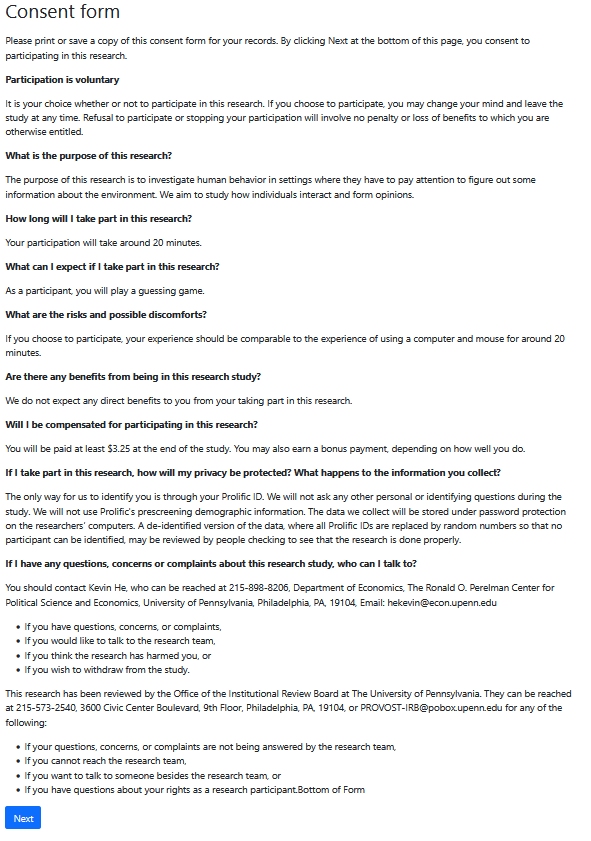}
    \caption{Consent form.}
    \label{fig:w1s1}
\end{figure}

\begin{figure}[H]
    \centering
    \includegraphics[width=0.9\linewidth]{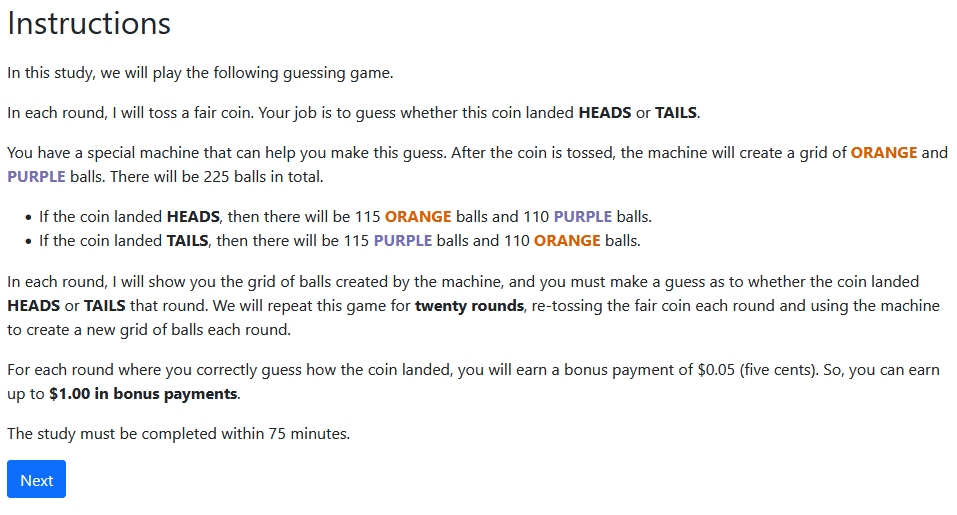}
    \caption{Instructions for Phase One subjects. Half of the subjects are in the low-incentive treatment (shown here) and the other half of the subjects are in the high-incentive treatment and receive \$1.25 for each round where they correctly guess how the coin landed.}
    \label{fig:w1s2}
\end{figure}

\begin{figure}
    \centering
    \includegraphics[width=0.9\linewidth]{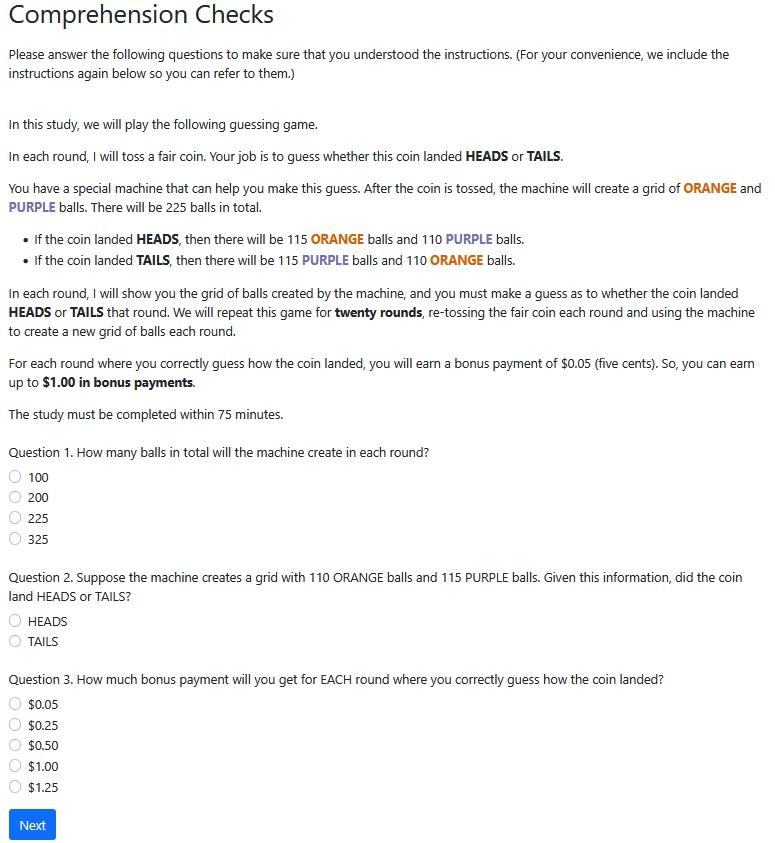}
    \caption{Comprehension questions for Phase One subjects.}
    \label{fig:w1s3}
\end{figure}

\begin{figure}
    \centering
    \includegraphics[width=0.9\linewidth]{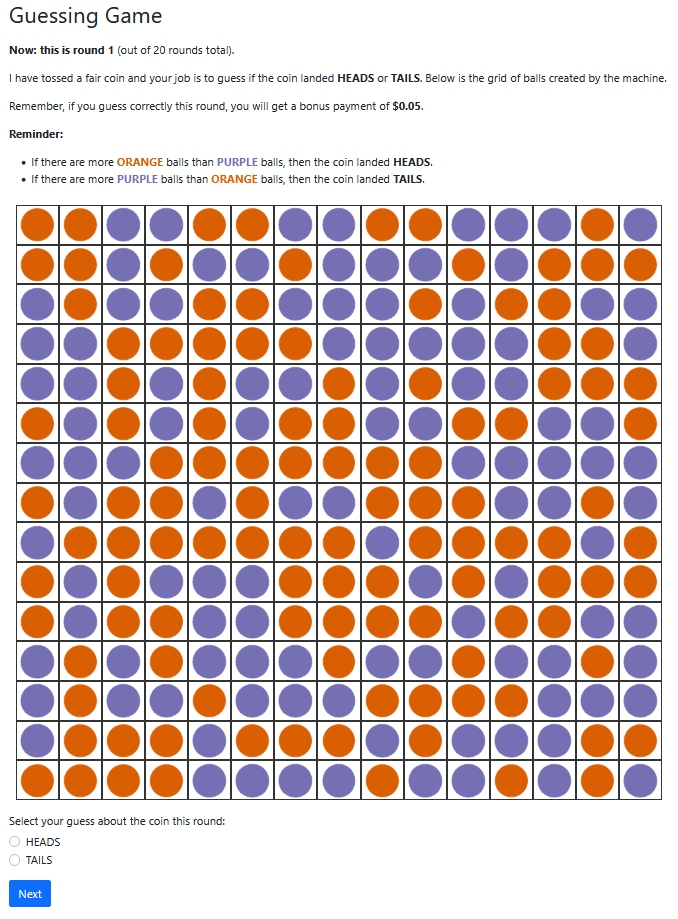}
    \caption{Attention task for Phase One subjects. Subjects completed 20 repetitions of this task.}
    \label{fig:w1s4}
\end{figure}

\begin{figure}
    \centering
    \includegraphics[width=1\linewidth]{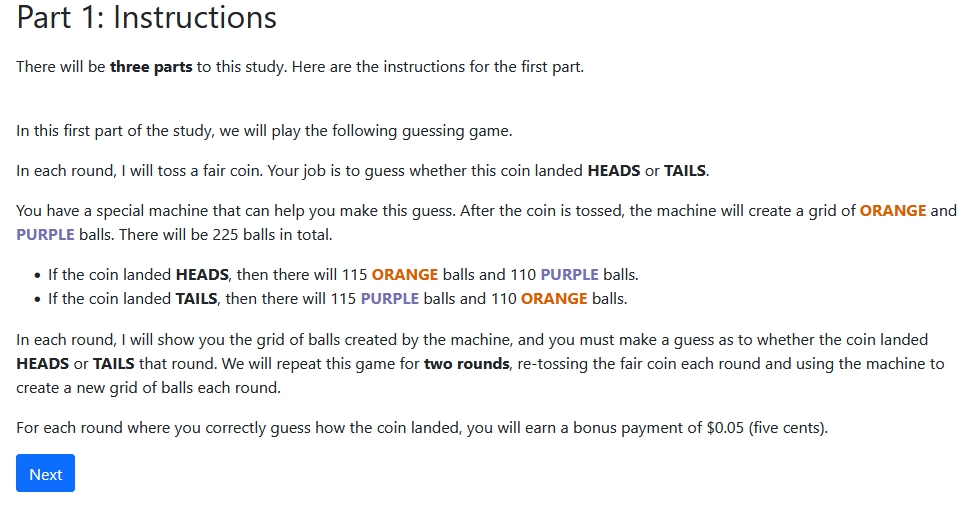}
    \caption{Instructions for Part 1 of the study for Phase Two subjects. Half of the subjects are in the low-incentive treatment (shown here) and the other half are in the high-incentive treatment and receive \$1.25 for each round where they correctly guess how the coin landed.}
    \label{fig:w2s1}
\end{figure}

\begin{figure}
    \centering
    \includegraphics[width=1\linewidth]{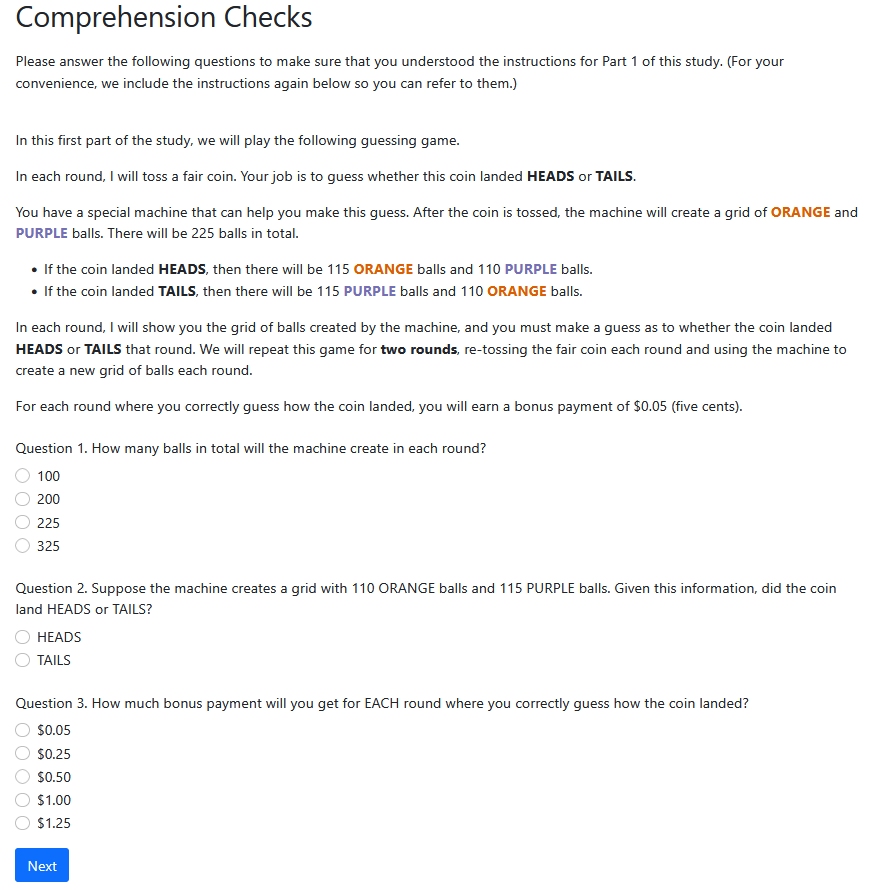}
    \caption{Comprehension questions for Part 1 of Phase Two.}
    \label{fig:w2s2}
\end{figure}

\begin{figure}
    \centering
    \includegraphics[width=0.9\linewidth]{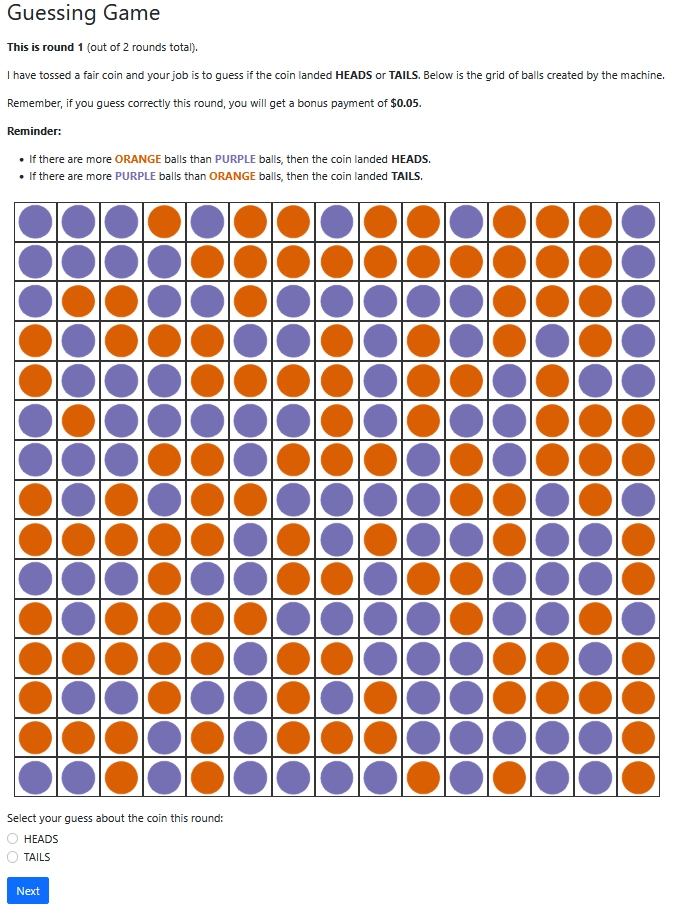}
    \caption{Attention tasks for Part 1 of Phase Two.}
    \label{fig:w2s3}
\end{figure}

\begin{figure}
    \centering
    \includegraphics[width=1\linewidth]{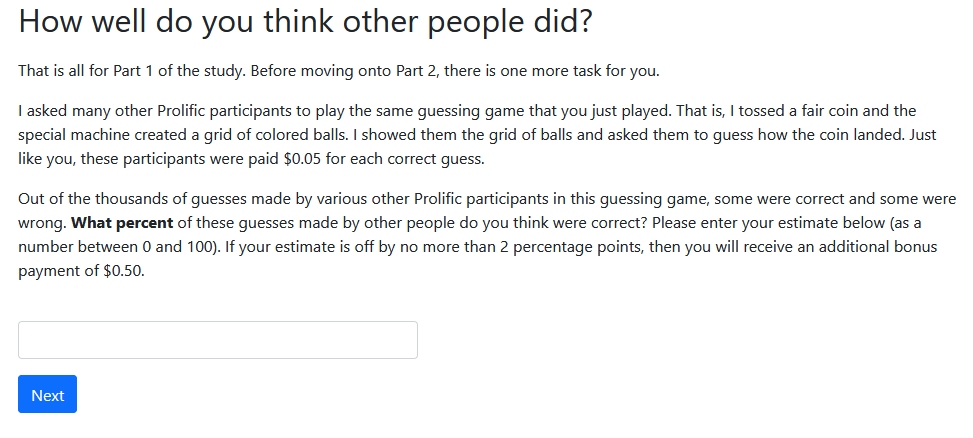}
    \caption{Asking Phase Two subjects to estimate the accuracy of Phase One subjects with the same incentive level. (This screen shows a Phase Two subject in the low-incentive treatment.)}
    \label{fig:w2s4}
\end{figure}

\begin{figure}
    \centering
    \includegraphics[width=1\linewidth]{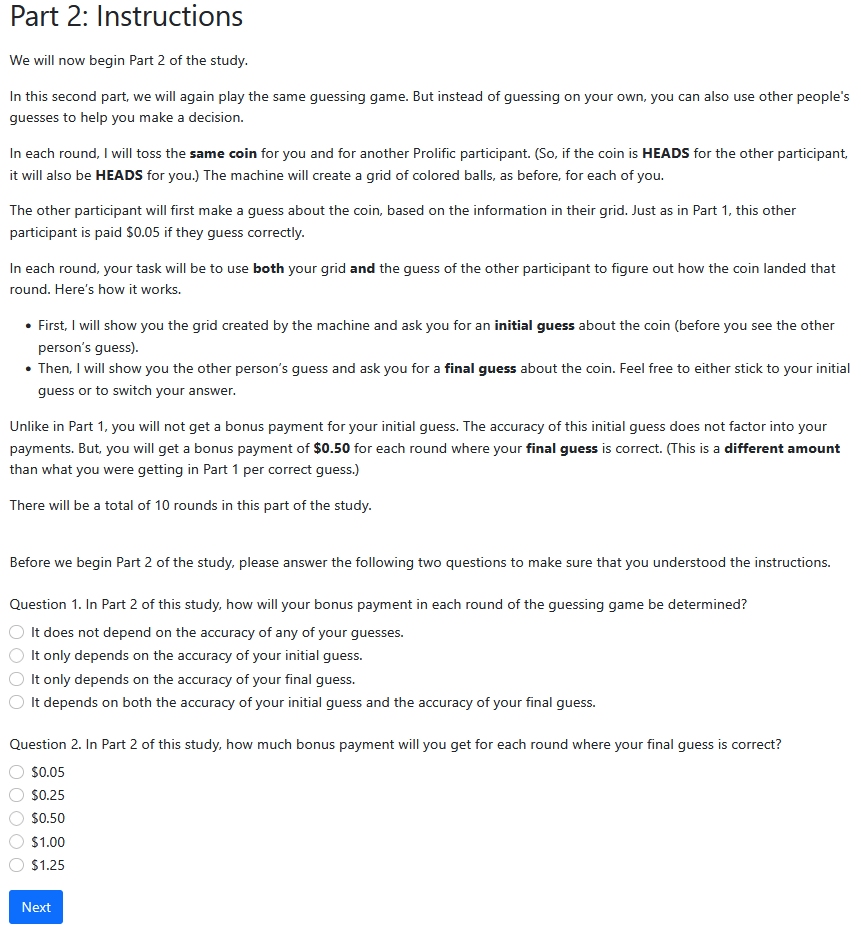}
    \caption{Instructions and comprehension questions for Part 2 of Phase Two. All Phase Two subjects are paid \$0.50 for each correct final guess, but subjects in low-incentive and high-incentive treatments saw the guesses of predecessors with different incentive levels.}
    \label{fig:w2s5}
\end{figure}

\begin{figure}
    \centering
    \includegraphics[width=0.8\linewidth]{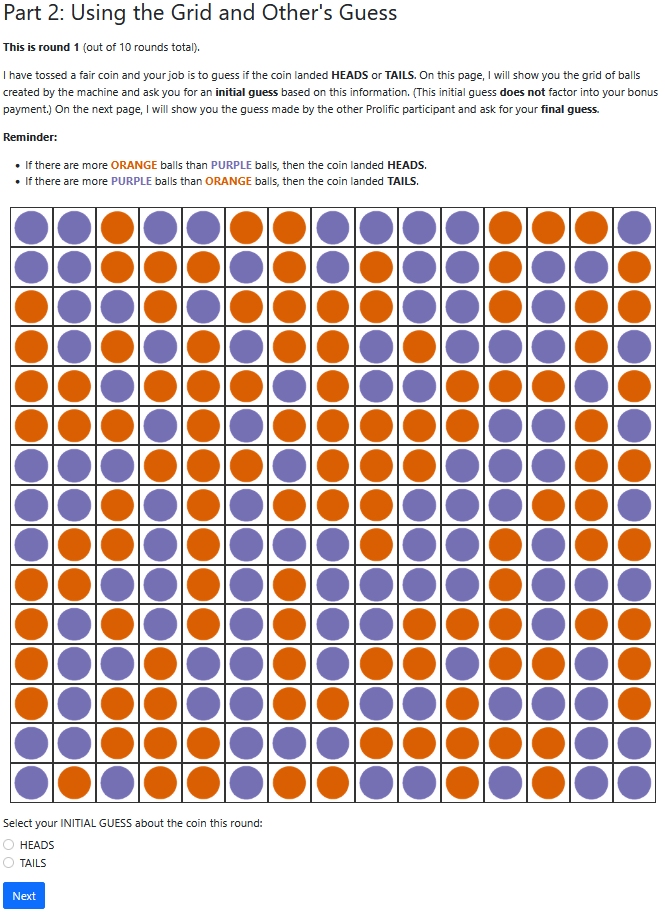}
    \caption{The first screen of the Attention-Substitution Task. The subjects see the grid and must make an initial guess to move on to the next screen.}
    \label{fig:w2s6}
\end{figure}

\begin{figure}
    \centering
    \includegraphics[width=0.8\linewidth]{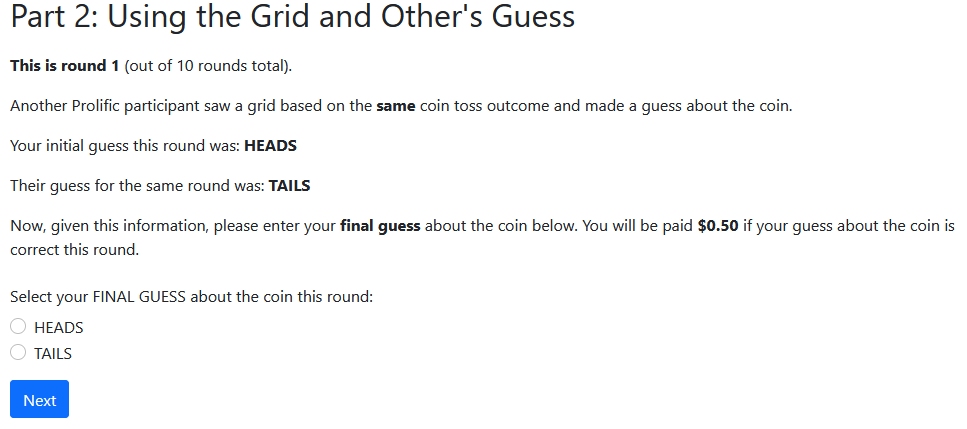}
    \caption{The second screen of the Attention-Substitution Task. The subjects see their initial guess and the guess of a Phase One subject. They cannot return to the previous screen with the grid and must make a final guess. Subjects repeated this process for ten rounds.}
    \label{fig:w2s7} 
\end{figure}

\begin{figure}
    \centering
    \includegraphics[width=0.8\linewidth]{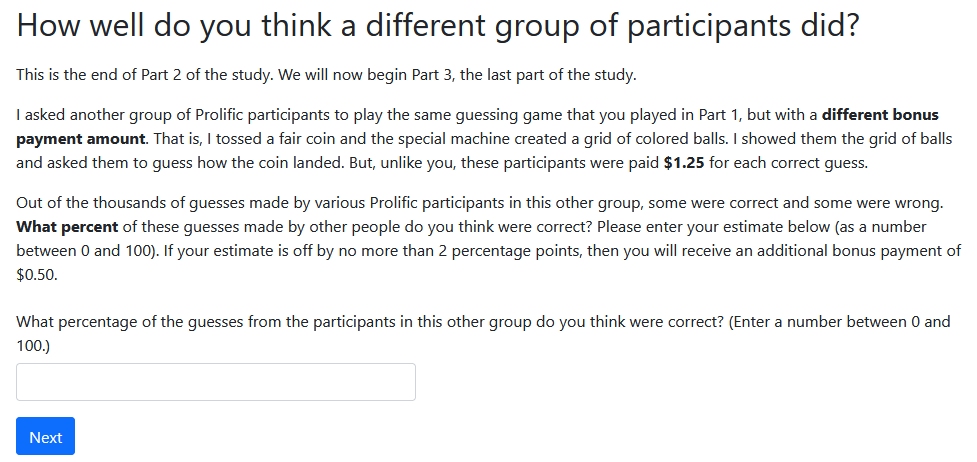}
    \caption{Asking Phase Two subjects to estimate the accuracy of Phase One subjects with  a different incentive level. (This screen shows a Phase Two subject in the low-incentive treatment.)}
    \label{fig:w2s8} 
\end{figure}

\begin{figure}
    \centering
    \includegraphics[width=0.8\linewidth]{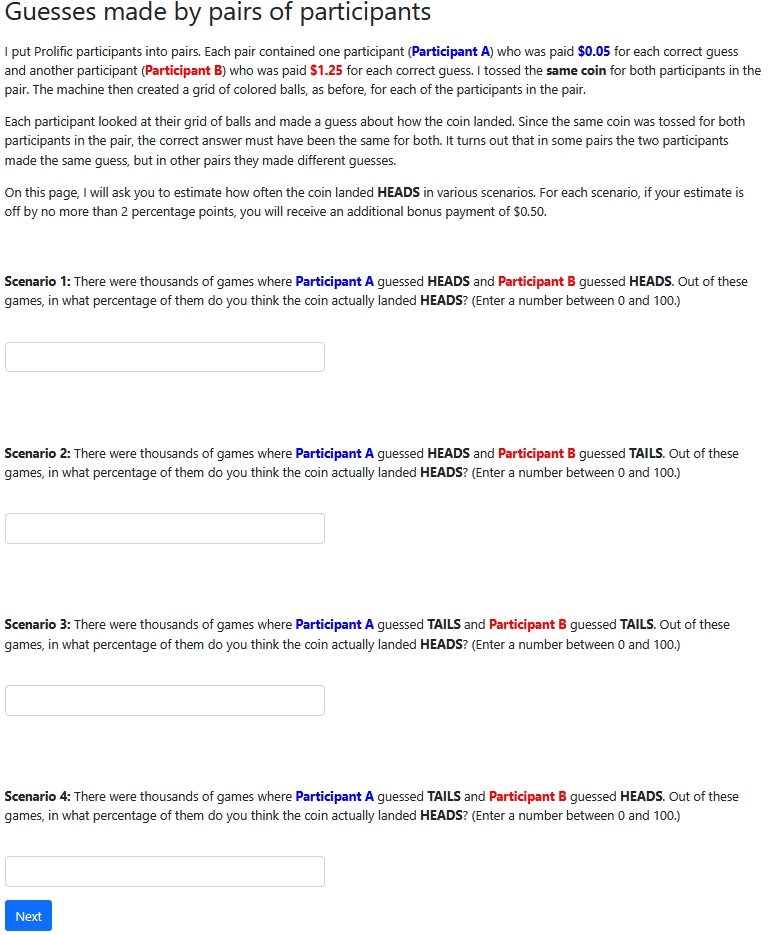}
    \caption{Asking Phase Two subjects to aggregate guesses made by two Phase One subjects from different incentive treatments. We randomized the phrasing of this question across subjects, so ``Participant A'' is in the low-incentive treatment for half of the Phase Two subjects and ``Participant B'' is in the low-incentive treatment for the other half. We also randomized the order of the four scenarios that correspond to guesses  HH, HT, TH, and TT from Participant A and Participant B.}
    \label{fig:w2s9} 
\end{figure}

\end{document}